\newcommand{\PreserveBackslash}[1]{\let\temp=\\#1\let\\=\temp}
\newcolumntype{C}[1]{>{\PreserveBackslash\centering}p{#1}}
\DeclareMathOperator{\midd}{mid}
\DeclareMathOperator{\radd}{rad}
\DeclareMathOperator{\vol}{Vol}
\newcommand{\R}{\mathbb{R}}
\newcommand{\rd}{\delta}
\newtheorem{definition}{\textbf{Definition}}
\newtheorem{lemma}{\textbf{Lemma}}
\newtheorem{theorem}{\textbf{Theorem}}
\newcommand{\RE}{\mathbb{R}}
\newcommand{\calX}{\mathcal{X}}
\newcommand{\calB}{\mathcal{B}}
\newcommand{\calF}{\mathcal{F}}
\newcommand{\hatM}{\hat{M}}
\title{\LARGE \bf
Lagrangian Reachtubes: The Next Generation
}
\author{Sophie Gruenbacher$^{1}$, Jacek Cyranka$^{2}$, Mathias Lechner$^{3}$, Md. Ariful Islam$^{4}$, Scott A.~Smolka$^{5}$, Radu~Grosu$^{1}$
\thanks{$^{1}$Vienna University of Technology, $^{2}$Institute of Informatics, University of Warsaw, $^{3}$IST Austria, $^{4}$Texas Tech University, $^{5}$Stony Brook University}%
}
\begin{document}

\maketitle
\thispagestyle{empty}
\pagestyle{empty}

\begin{abstract}

We introduce LRT-NG, a set of techniques and an associated toolset that computes a reachtube (an over-approximation of the set of reachable states over a given time horizon) of a nonlinear dynamical system. LRT-NG significantly advances the state-of-the-art Langrangian Reachability and its associated tool LRT. From a theoretical perspective, LRT-NG is superior to LRT in three ways. First, it uses for the first time an analytically computed metric for the propagated ball which is proven to minimize the ball's volume. We emphasize that the metric computation is the centerpiece of all bloating-based techniques. Secondly, it computes the next reachset as the intersection of two balls: one based on the Cartesian metric and the other on the new metric. While the two metrics were previously considered opposing approaches, their joint use considerably tightens the reachtubes. Thirdly, it avoids the ``wrapping effect'' associated with the validated integration of the center of the reachset, by optimally absorbing the interval approximation in the radius of the next ball. From a tool-development perspective, LRT-NG is superior to LRT in two ways. First, it is a standalone tool that no longer relies on CAPD. This required the implementation of the Lohner method and a Runge-Kutta time-propagation method. Secondly, it has an improved interface, allowing the input model and initial conditions to be provided as external input files. Our experiments on a comprehensive set of benchmarks, including two Neural ODEs, demonstrates LRT-NG's superior performance compared to LRT, CAPD, and Flow*.

\end{abstract}

\section{Introduction}\label{sect:introduction}

Nonlinear ordinary differential equations (ODEs) are ubiquitous in the
modeling of cyber-physical and biological systems~\cite{FrehseLGDCRLRGDM11,grosuIcra19,DBLP:conf/hybrid/IslamMGSG14,grosuCav11,morTCS15,DBLP:journals/tac/TschaikowskiT16,cttvPNAS}. With the advent of Neural ODEs~\cite{neuralODEs}, they also became of critical importance in the Machine Learning arena, as the most popular image-recognition architectures have been shown to represent particular integration strategies of Neural ODEs~\cite{neuralODEInt}. Unfortunately, exact solutions of ODEs are limited to linear equations. The verification of nonlinear equations requires the computation of an overapproximation of their reachable states (also called a reachtube), in as tight a way as possible.

In principle, this computation is straightforward: one applies interval arithmetic to the Taylor Expansion (TE) in time of the ODEs, starting from a box (a multi-dimensional interval) representing the  set of initial states. In practice, however, this approach is too coarse~\cite{advancesValidated}, as it leads to a prohibitively large set of false positives (spurious intersections of the reachtube with the unsafe set). Hence, this approach alone is not useful in verifying cyber-physical systems (CPS).
\\[1.5mm]
\textbf{Related work.} Various techniques have been developed to tighten the above reachtube construction. Depending on how they approximate the solution set in time and space, they can be classified as follows:

\emph{TE in time, VE in space.} The variational expansion (VE) tightens the reachtubes by exploiting the deformation gradient (also known as the sensitivity) of the ODE solutions with respect to their initial states~\cite{linTE,contMec,breach,donze}. These techniques simultaneusly integrate the ODEs (to compute next reachset) and their associated variational ODEs (to compute next deformation-gradient set)~\cite{CAPD}.  State-of-the-art tools include CAPD \cite{CAPD,capdTheory,CRLOHNER} and VNode-LP \cite{vnode1,vnode2}.
These tools employ the Lohner method to reduce the wrapping effect~\cite{lohnerOrig}. As an alternative to TE the Runge-Kutta method can also be applied~\cite{Runge}.

\emph{TE in both time and space.} These techniques use Taylor Models in order to capture intervals as the sum of a symbolic part (a TE) and a remainder part (an interval). The symbolic part arises from the representation of the initial-states box in a parametric fashion, with space variables. State-of-the-art tools include COSY-Infinity~\cite{MakinoBerz2003}, the first to use such techniques with continuous systems, Flow*~\cite{DBLP:conf/cav/ChenAS13}, which extended them to hybrid systems, and CORA~\cite{Althoff2018b}. 

\emph{Bloating-based techniques.} These techniques avoid the explicit propagation of the reachset by conservatively bloating each state of an  execution (which starts from a single initial state) to a ball in some appropriate metric~\cite{MA,fansimul}. By bloating we mean increasing the diameter of the current reachset overestimate (e.g.\ a ball), with the goal of conservatively bounding all reachable states for a given time.  Discrepancy-function techniques (tool C2E2) compute the bloating by overapproximating the Jacobian of the ODEs, and compute the metric using semidefinite programming~\cite{Fan2015,Fan2016,fansimul}. Lagrangian techniques (tool LRT) compute the bloating by integrating the variational ODEs (with interval arithmetic) and the metric either analytically~\cite{CyrankaCDC18,gruenbacherArch19} or by using semidefinite programming~\cite{Cyranka2017}.

{\em Hamilton-Jacobi based techniques.} A completely different set of techniques than the ones we mentioned above are the Hamilton-Jacobi based techniques, which construct conservative reachtubes by iteratively solving the Hamilton-Jacobi partial differential equations~\cite{hjb1,hjb2}.

{\em Specific-ODEs techniques.} Some techniques are designed for very specific types of nonlinear dynamical systems, e.g., polynomial ODE systems. For these types of ODEs, the state-of-the-art approaches include, as an example, the Piecewise Barrier Tubes (PBT) approach~\cite{poly1,poly2}.
\\[1.5mm]
\textbf{Contributions.}
We significantly advance theoretical aspects of bloating-based techniques in general, and Lagrangian techniques in particular. We also considerably advance implementation aspects of LRT. We call the resulting techniques and tool \emph{LRT: The Next Generation} (LRT-NG).

The first advance concerns the metric, the centerpiece of all bloating based techniques, as everything else depends on it. Discrepancy-function techniques compute the metric in each step using semidefinite programming. The same is true for the Lagrangian technique in~\cite{Cyranka2017}. A suboptimal analytic formula was later on proposed in~\cite{CyrankaCDC18}. Hence, the above techniques lose either time or accuracy, i.e., they fail to scale up to large applications. The analytic formula proposed in this paper is proved to be indeed optimal, minimizing the volume of the reach-ball. Thus, LRT-NG has the advantage of being fast and accurate at the same time. 
The formula computes the next metric (used for the next bloating) by applying the linearization of the variational ODEs (in the next center-state) to the currently used metric. 

The second advance concerns the reachset computation. In previous work, point-state bloating was computed in either
Cartesian coordinates (with balls), or in metric-weighted coordinates (with ellipsoids). These approaches were considered opposites of one another, and were compared in terms of accuracy and speed. Due to interval arithmetic, however, the optimal ellipsoid is often not a subset of the ball, although its volume is less than the one of the ball. Consequently, we can conservatively decrease the reachset in LRT-NG by taking it as the intersection of the ball and the ellipsoid. While simple in principle, this approach considerably tightens the reachtubes, and to the best of our knowledge, is new. 

The third advance is the validated propagation of the point-states (the centers of the bloating). In previous work, a point was first conservatively propagated as a box, and subsequently solely boxes were propagated. This led to wrapping effects due to interval arithmetic. In LRT-NG, we add the size of the box to the bloating, and always propagate points, thus avoiding interval arithmetic.

In summary, our theoretical contributions resulting in much tighter reachtubes are as follows:
\begin{description}
\item{\em Metric computation.}  We introduce a new analytic method for computing the next-ball (next bloating) metric, and prove that by using this metric we minimize the volume of the resulting next-ellipsoid (Cartesian bloating).
\item{\em Reachset computation.} We considerably reduce the wrapping effect in the computation of the next reachset by intersecting the ball resulting in the Cartesian metric with the ellipsoid computed in the optimal metric.
\item{\em Center propagation.}  We show how to conservatively propagate the bloating-center states, without incurring the infamous wrapping effect due to the interval-arithmetic propagation of boxes (multi-dimensional intervals).
\end{description}

\noindent In summary, from a tool-development perspective, LRT-NG offers the following improvements:
\begin{description}
\item[\normalfont{\emph{Standalone tool.}}] LRT-NG is a standalone tool that no longer relies on CAPD to integrate the variational equations. This also required the implementation of an advanced Lohner's method, to combat the wrapping effect.
\item[\normalfont{\emph{Runge-Kutta time-integrator.}}] We base the time-stepping scheme in LRT-NG on the Runge-Kutta method~\cite{vrk2016}. This results in faster computation and, relatedly, less wrapping effect compared to the Taylor method.
\item[\normalfont{\emph{Improved interface.}}] We added a general interface to LRT-NG, allowing to input the model and initial conditions as external files. Consequently, it is now easy to run a different model or change parameters for reachtube construction, without recompiling the program. 
\end{description}

\noindent We compared LRT-NG with LRT, CAPD, and Flow* on a comprehensive set of benchmarks, including two Neural ODEs. We could not compare LRT-NG it with C2E2, as this constructs the reachtube with respect to a given set of unsafe states. In future work we plan to compare LRT-NG with CORA, too.
For each tool, we checked the tightness of the reachtube, and the time it took to construct it. Our extensive experiments show that LRT-NG is very competitive with both CAPD and Flow* on all of these benchmarks, and that it represents a major advancement of LRT. Moreover, it was the only tool able to handle the NeuralODEs. We will release LRT-NG after appropriately documenting it.

The rest of the paper is organized as follows. Section~\ref{sect:background} reviews the theoretical foundations of LRT. Section~\ref{sect:th-lrt-ng} discusses the theoretical advances in LRT-NG, while Section~\ref{sect:ta-lrt-ng} considers tool advances. Section~\ref{evaluation}, presents the results of our thorough experimental evaluation of LRT-NG and compares it to CAPD, Flow* and LRT. Section~\ref{conclusions} offers our concluding remarks and directions for future work.

\section{Background}\label{sect:background}
\emph{Notation.} We denote by $I$ the identity matrix in $\RE^{n \times n}$, and by $\partial_x$ the partial derivative with respect to variable $x$. We denote by $\|\cdot\|_2$ the \emph{Euclidean norm} with the related Cartesian Coordinates (CC).
A similar notation is used for the induced operator norms. Following standard notation, the symbol $\succ 0$ stands for positive definiteness. For a given square matrix $M_0 \succ 0$ with $A_0^TA_0\,{=}\,M_0$, let $B_{M_0}(x,\rd)$ be the closed ball with center $x$ and radius $\rd > 0$, with respect to the metric $\|x\|_{M_0}\,{=}\,\sqrt{x^T M_0 x}\,{=}\,\lVert A_0x \rVert_2$. Let $M_i\,{\succ}\,0$ with $A_i^TA_i\,{=}\,M_i$, be another square matrix. We define the metric $\lVert x\rVert_{M_{0,i}}\,{=}\,\lVert A_i x A_0^{-1} \rVert_2$.
We denote by $B(x,\rd)$ the Euclidean ball $B_I(x,\rd)$.
The notation  $[x]$ is used for a box in $\RE^n$, i.e., a product of compact intervals. We denote by $\midd([x])$ and $\radd([x])$ the midpoint and the radius of $[x]$, respectively. We call an $n\,{\times}\,n$ matrix $[\mathcal{F}]$ an interval matrix, if for all $1\le i,j \le n$, the entries $[\mathcal{F}](i,j)$ of $[\mathcal{F}]$ are intervals.

\vspace{1.5mm}We study a system of nonlinear ODEs in unknown $x\,{\in}\,\RE^n$,
where vector field $f\,{:}\,\RE^n\,{\to}\,\RE^n$ is assumed to be sufficiently smooth, time-invariant, and at least twice differentiable. As time dependence can be incorporated by adding auxiliary variable $\partial_t x_n = 1$, our discussion naturally extends to time-varying systems of the form $\partial_t x = f(t,x)$.
\begin{align}\label{cauchy}
\partial_t x & = f(x), \quad x_0 = x(t_0) 
\end{align}


Let $\chi_{t_0}^{t}(x_0)\,{=}\,x(t)$ be the solution flow induced by Eq.~\eqref{cauchy}. Suppose we would like to know how sensitive $\chi_{t_0}^{t}(x)$ is with respect to $x$. For this purpose, we introduce $F(t,x)\,{=}\,\partial_x \chi_{t_0}^{t}(x)$, called the \emph{deformation gradient} in~\cite{linTE,contMec}, and the \emph{sensitivity analysis} in~\cite{breach,donze}. $F$ varies over time according to $\partial_t F\,{=}\,\partial_t(\partial_x \chi_{t_0}^{t}(x))$. Interchanging the differentiation order, we get $\partial_t F\,{=}\,\partial_x(\partial_t \chi_{t_0}^{t}(x))$, which according to Eq.~\eqref{cauchy}, equals $\partial_t F\,{=}\,\partial_x(f(\chi_{t_0}^{t}(x)))$. By applying the chain rule, one obtains $\partial_t F\,{=}\,(\partial_{x}f)(\chi_{t_0}^{t}(x))\ \partial_{x}\chi_{t_0}^{t}(x)$.
Since $\partial_{x}\chi_{t_0}^{t}(x)\,{=}\,F(t,x)$, we get the \emph{variational equations}~\eqref{variational} associated to the system equations~\eqref{cauchy}.
\begin{align}\label{variational}
\partial_t F(t,x) =  (\partial_{x}f)(\chi_{t_0}^{t}(x))\ F(t,x), \quad F(t_0,x_0) = I
\end{align}
We define the interval matrix $[\calF_i]\supseteq\{F(t_i,x):x\in\calB_0\}$ as a compact enclosure for the gradients and call it the \emph{interval deformation gradient}. The conservative computation of $[\calF_i]$ requires the interval integration of the variational equations~\eqref{variational} starting with the initial value $[\calF_0]=I$.

LRT starts with a ball $\calB_0\,{=}\,B(x_0,\rd_0)$ in CC as the set of initial states, and computes how $\calB_i$ deforms to become $\calB_{i+1}$ as time passes. This results in a reachtube $\mathcal{B}\,{=}\,\calB_{k=0:i}$ bounding all state-trajectories of the system. Each element of
$\calB$ is an ellipsoid with $\calB_i\,{=}\,B_{M_i}(x_i, \delta_i)$, with center $x_i$ and radius $\delta_i$, where the positive definite matrix $M_i$ specifies the metric of the associated ball. 

Considering $\chi_{t_0}^{t_i}(x_0)$ as the center of the ellipsoid at time $t_i$ and after choosing an appropriate metric $M_i\succ 0$
, the task now is to find a radius $\rd_i$ which guarantees that all reachable states are conservatively enclosed by $\calB_i$.
Using the mean value inequality for vector valued functions and considering the change of metric~\cite[Lemma 2]{Cyranka2017} it holds that:
\begin{align*}
\max_{x\in\calB_0}\lVert\chi_{t_0}^{t_i}(x)&-\chi_{t_0}^{t_i}(x_0)\rVert_{M_i}\\
&\leq \max_{x\in\calB_0}\lVert F(t_i,x)\rVert_{M_{0,i}} \max_{x\in\calB_0}\lVert x-x_0\rVert_{M_0}\\
&\leq \lVert [\calF_i]\rVert_{M_{0,i}} \rd_0.
\end{align*}
Thus the challenge is to find an upper bound $\Lambda_{0,i}$ for the norm of the interval deformation gradients:
%
\begin{align}\label{eq:Lambda}
    \lVert [\calF_i]\rVert_{M_{0,i}}\leq \Lambda_{0,i} \Rightarrow \rd_i = \Lambda_{0,i}\rd_0.
\end{align}
We call $\Lambda_{0,i}$ the \emph{stretching factor (SF)} associated to the interval gradient tensor.
Having the interval gradient $[\calF_i]$ at time $t_i$ we solve equation~\eqref{eq:Lambda} using algorithms from~\cite{hladik,rump,rohn}, and choosing the tightest result available.
%
%
%

%
The correctness of LRT is rooted in~\cite[Theorem 1]{Cyranka2017}. 
\section{Theoretical Advances in LRT-NG}\label{sect:th-lrt-ng}

This section presents the theoretical advances of LRT-NG. In particular, Section~\ref{sec:metric} first states the optimization problem to be solved in order to get the optimal metric, and thus the reachset with minimal volume. We then introduce an analytic solution, prove that it solves the optimization problem, and discuss the intuition behind it. Section~\ref{sec:reacxh} focuses on the new reachset computation, as the intersection of an ellipsoid and a ball, and illustrates the results of this computation. Finally, Section~\ref{sec:center} shows how to conservatively propagate the center without incurring the infamous wrapping effect of interval arithmetic.

\subsection{Computation of the Metric}\label{sec:metric}

To obtain an as-tight-as-possible over-approximation, we wish to minimize the volume of the n-dimensional ball $\calB_i$, that is of $B_{M_i}(\chi_{t_0}^{t_i}(x_0),\delta_i)$. This volume is given by~\cite{volEllipsoid}, recalled below, where $a_j$ denotes the length of the semi-axis $j$ of the ellipsoid associated to $\calB_i$ in CC ($a_j = \delta_i \, \lambda_j(M_i)^{-1/2}$ with $\lambda_j(M_i)$ being the $j$-th eigenvalue of $M_i$):
%
\begin{align}\label{vol1}
\vol(\calB_i) = \frac2{n} \frac{\pi^{\frac{n}{2}}}{\Gamma(\frac{n}{2})} \prod_{j=1}^n a_j,
\end{align}
where $\Gamma$ is the Gamma-function. It holds that $\Lambda_{0,i}\,(M_i)\,{\geq}\,\lVert[\calF_i]\rVert_{M_{0,i}}$. Given that the set $[\calF_i]$ is represented by an interval matrix in our algorithm, we pick the $M_i (=A_i^T A_i)$ that minimizes the ellipsoid volume with respect to the gradient $F_i\in [\calF_i]$ in the midpoint $x_0$. Thus the optimization problem is given by ($\delta_i\,{=}\,\Lambda_{0,i}\,(M_i)\,\delta_{0}$):
\begin{align}\label{vol2}
&\min_{M_i\succ 0}\vol\left(B_{M_i}(\chi_{t_0}^{t_i}(x_0),\Lambda_{0,i}(M_i) \delta_0)\right)=\{\mathrm{\eqref{vol1},\ def}\ \calB_i\}\nonumber\\
&\min_{M_i\succ 0}\frac2{n} \frac{\pi^{\frac{n}{2}}}{\Gamma(\frac{n}{2})}\rd_i^n \prod_{j=1}^n \lambda_j(M_i)^{-1/2}=\{\mathrm{\tiny outsource\ Cst:\ C(n)}\}\nonumber\\
&C(n)\min_{M_i\succ 0} \Lambda_{0,i}(M_i)^n \,\det(M_i)^{-1/2}
\end{align}
Let us further define $\hat{F}_{i-1,i} = \partial_x\chi_{t_{i-1}}^{t_i}(x)|_{x=\chi_{t_0}^{t_{i-1}}(x_0)}$ as the deformation gradient from time $t_{i-1}$ to $t_i$ at the center of the ball. Using the chain rule it holds that $F_i = \prod_{j=1}^i \hat{F}_{j-1,j}$, where $F_i$ is defined as the deformation gradient at $x_0$.
\vspace{2ex}
\begin{definition}[New metric $\hatM_i\,{\succ}\,0$]\label{def:M1}
\textit{Let the gradient-of-the-flow matrices $F_i$ and $\hat{F}_{i-1,i}\,{\in}\,\R^{n\times n}$ be full rank, and the coordinate-system matrix of the last time-step $A_{i-1}\,{\in}\,\R^{n\times n}$ be full-rank and $A_{i-1}\,{\succ}\,0$. Define metric $\hat{M}_i(F_{i})=\hat{A}_i(F_i)^T\hat{A}_i(F_i)$ with:}
\begin{align}\label{M1opt}
    \hat{A}_i(F_i) = A_{i-1}\hat{F}_{i-1,i}^{-1} = A_0F_i^{-1} 
\end{align}
\textit{When $F_i$ is known, we simply abbreviate $\hat{A}_i(F_i)$ with $\hat{A}_i$, and $\hatM_i(F_i)$ with $\hatM_i$.}
\end{definition}
\vspace{2ex}
In order to find an optimal metric $M_i$, the theoretical approach of LRT presented in~\cite{Cyranka2017,CyrankaCDC18,gruenbacherArch19} solved the following optimization problem:
\begin{align}\label{vol3}
\min_{M_i\succ 0}\lVert[\calF_i]\rVert_{M_{i}}
\end{align}
This is different from~\eqref{vol2}, and its goal was to find the metric $\tilde{M}_i$ minimizing the $M_i$-SF. In~\cite{Cyranka2017}, $M_i$ was found through nonlinear optimization, which was time consuming and sub-optimal. In~\cite{CyrankaCDC18,gruenbacherArch19},  this was improved, by developing a simple explicit analytic formula for $\tilde{M}_i$. But in fact, this was a correct result for the wrong optimization problem.

In this paper, we do not minimize the maximum singular value of the interval gradient, as in~\eqref{vol3}, but we directly minimize the volume of the n-dimensional ball $\calB_i$, as in~\eqref{vol2}, at each time-step.
The new analytical formula for the optimal metric $\hatM_i$ in Def.~\ref{def:M1}, results in considerable tighter bounds, compared to the old $\tilde{M}_i$ in~\cite[Definition 1]{CyrankaCDC18}, as we prove below in Theorem~\ref{thmopt}. 

The new metric takes into account the previous metric $M_{i-1}$, and explicitly minimizes the volume of the resulting ellipsoidal bounds by using a smart choice of $M_i$. Intuitively we find a metric $\hatM_i$ which describes the shape of the set of reachable states at a certain time, which results in the problem of minimizing the volume of the n-dimensional ellipsoid as described in~\eqref{vol2}. The metric $\hatM_i$ optimizing this problem is the one in Definition~\ref{def:M1}: $\hat{A}_i^{-1}\,{=}\, \hat{F}_{i-1,i}\,A_{i-1}^{-1}$. Intuitively, we transform the current shape represented by the coordinate system $A_{i-1}^{-1}$ using $\hat{F}_{i-1,i}$, the linearization of the system dynamics. Thus with $\hat{M}_i$, we actually describe the current shape of the set of reachable states. A comparison of the reachsets obtained with LRT metric $\tilde{M}_i$ and the one obtained with LRT-NG metric $\hat{M}_i$ is illustrated in Fig.~\ref{fig:metricComparison}. It shows that the LRT metric is by far not an optimal choice, and it also shows how well our new analytically computed metric $\hat{M}_i$ follows the shape of the set of reachable states.

\begin{figure}[t]
    \centering
    \includegraphics[width=0.45\textwidth]{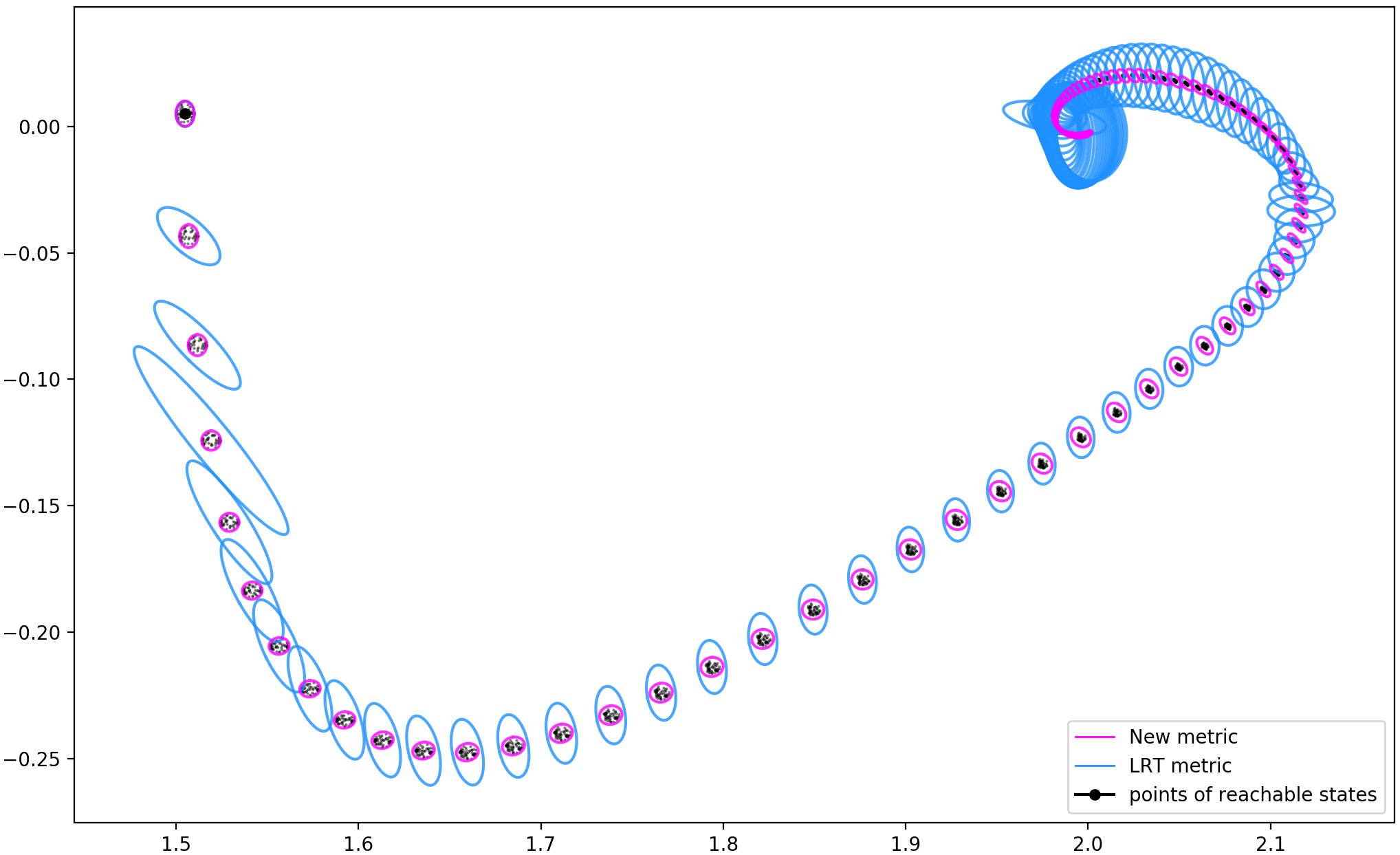}
    \vspace{-1.ex}
    \caption{Reachtube for Robotarm model ($t\in[0,5]$) obtained with LRT and new metric.}
    \label{fig:metricComparison}
\vspace{-3ex}
\end{figure}

The following theorem states that metric $\hat{M}_i$ minimizes the ellipsoid volume in~\eqref{vol2}, and it is therefore optimal.
\vspace{2ex}
\begin{theorem}\label{thmopt}
\textit{Let $F_i \in \mathbb{R}^{n\times n}$ be a gradient having full rank, and $\hat{A}_i$, $\hat{M}_i$ be defined by \eqref{M1opt}. Let the $M_{0,i}$-SF be given by (with $M_0$ fixed):
\[
\Lambda_{0,i}(M_i)=\sqrt{\lambda_{\max}\left((A_0^T)^{-1}F_i^TM_iF_iA_0^{-1}\right)}.
\]
Then, it holds that $\ \vol\left(B_{\hat{M_i}}(\chi_{t_0}^{t_i}(x_0),\Lambda_{0,i}(\hat{M_i})\,\delta_0)\right)$ is equal to:
\[
    \min_{M_i\succ 0} \vol\left(B_{M_i}(\chi_{t_0}^{t_i}(x_0),\Lambda_{0,i}(M_i)\,\delta_0)\right).
\]
In other words, the symmetric matrix $\hatM_i\,{\succ}\,0$ minimizes the volume of the ellipsoid $B_{M_i}(\chi_{t_0}^{t_i}(x_0),\Lambda_{0,i}(M_i)\,\delta_0)$ as a function of $M_i$.}
\end{theorem}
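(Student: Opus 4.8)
The plan is to reduce the volume minimization in \eqref{vol2} to an elementary eigenvalue inequality by a single congruence change of variables. First, recall from the chain of identities leading to \eqref{vol2} that, up to the $M_i$-independent factor $C(n)\,\delta_0^n$, the quantity to minimize over $M_i\succ 0$ is $\Phi(M_i):=\Lambda_{0,i}(M_i)^n\det(M_i)^{-1/2}$; I substitute the closed form $\Lambda_{0,i}(M_i)^2=\lambda_{\max}\big((A_0^T)^{-1}F_i^TM_iF_iA_0^{-1}\big)$ given in the theorem statement.

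Next, I set $P=F_iA_0^{-1}$, which is invertible because $F_i$ and $A_0$ are full rank, and let $N=P^TM_iP$. As $M_i$ ranges over the positive-definite cone so does $N$, congruence by an invertible matrix being a bijection of that cone, and the inverse map is $M_i=P^{-T}NP^{-1}=\hat{A}_i^TN\hat{A}_i$ since $P^{-1}=A_0F_i^{-1}=\hat{A}_i$. Because $(A_0^T)^{-1}F_i^TM_iF_iA_0^{-1}=P^TM_iP=N$, we have $\Lambda_{0,i}(M_i)^2=\lambda_{\max}(N)$, while $\det(M_i)=\det(P)^{-2}\det(N)$ differs from $\det(N)$ only by a positive $M_i$-independent factor. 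Hence $\Phi(M_i)$ equals, up to a positive constant, $g(N)^{1/2}$ with $g(N):=\lambda_{\max}(N)^n/\det(N)$, so minimizing $\Phi$ over $M_i\succ 0$ is equivalent to minimizing $g$ over $N\succ 0$.

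The core step is then immediate: writing $\det(N)=\prod_{j=1}^n\lambda_j(N)$ gives $g(N)=\prod_{j=1}^n\big(\lambda_{\max}(N)/\lambda_j(N)\big)\ge 1$, with equality precisely when all eigenvalues of $N$ coincide, i.e.\ $N=cI$ for some $c>0$ (this is just AM--GM on the eigenvalues, or the elementary bound $\lambda_{\max}(N)^n\ge\prod_j\lambda_j(N)$). The choice $N=I$ corresponds under the change of variables to $M_i=\hat{A}_i^T\hat{A}_i=\hat{M}_i$, which by Definition~\ref{def:M1} is exactly the claimed metric; it attains $g=1$, hence the global minimum of $\Phi$, and tracing the constants back through the substitution yields the asserted equality of volumes.

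I expect the work to be bookkeeping rather than conceptual: one must check carefully that $N=P^TM_iP$ is a genuine bijection of the positive-definite cone and that every $\det$ and $\lambda_{\max}$ factor is tracked so that the $M_i$-dependence collapses entirely into $g(N)$. It is also worth noting that $\Phi$, equivalently $g$, is invariant under the rescaling $M_i\mapsto cM_i$ ($c>0$), so the minimizer is unique only up to positive scaling, with $\hat{M}_i$ the natural representative given by $N=I$; this is consistent with the fact that rescaling the metric rescales $\delta_i$ inversely and leaves the ball $\calB_i$ unchanged.
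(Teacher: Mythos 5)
Your proof is correct and rests on the same key inequality as the paper's proof, namely $\lambda_{\max}(N)^n \ge \det(N)$ for $N \succ 0$ applied to $N = (A_0^T)^{-1}F_i^T M_i F_i A_0^{-1}$: the paper uses it directly to obtain the lower bound $C(n)\det(F_i)/\det(A_0)$ on the volume and then verifies that $\hat{M}_i$ attains it (which is exactly your observation that $N=I$ gives $g=1$), so your congruence change of variables $N = P^T M_i P$ is a repackaging rather than a different route. The one genuine extra your version buys is the equality case of the eigenvalue inequality, which identifies the full set of minimizers as the ray $\{c\,\hat{M}_i : c>0\}$ --- a uniqueness-up-to-scaling statement the paper does not claim.
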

\vspace{2ex}
\begin{proof}
For an arbitrary matrix $D\,{\in}\,\R^{n\times n},\, D\,{\succ}\,0$ it is straightforward to show that $\lambda_{\max}(D)\,{\geq}\,\det(D)^{\frac1{n}}$ holds. Hence:
\vspace{-3ex}
\[
\Lambda_{0,i}(M_i)\geq
\lVert F_i\rVert_{M_{0,i}}
\geq \left(\frac{\det(F_i)^2\, \det(M_i)}{\det(A_0)^2}\right)^{\frac1{2n}}.
\]
Using the volume of an ellipsoid given by Eq.~\eqref{vol1} it follows:
\vspace{-1ex}
\[
\vol(\calB_i) =
C(n)\,\Lambda_{0,i}(M_i)^n \,\det(M_i)^{-\frac1{2}} \geq C(n)\,\frac{\det(F_i)}{\det(A_0)},
\vspace{-1ex}
\]
%
%
We show that this lower bound is attained for $\hatM_i$. By Definition~\ref{def:M1}, $\hat{A}_i = A_0F_i^{-1}$ and thus $\Lambda_{0,i}(\hatM_i)$ is equal to:
\[
\lVert F_i\rVert_{M_{0,i}}=\lVert A_0F_i^{-1}F_iA_0^{-1}\rVert_2
=1
\]
Hence, the volume $\vol(\calB_i,\hatM_i)$
is equal to:
\[
C(n)\,\det(\hatM_i)^{-1/2} =
C(n)\,\frac{\det(F_i)}{\det(A_0)}
\vspace{-2ex}
\]
\end{proof}

Thus, Theorem~\ref{thmopt} gives us an analytic solution for the optimal metric, releasing us from either solving an optimization problem with semi-definite programming in every time-step like in~\cite{fansimul,Cyranka2017}, or risking false-positive consequences of using a suboptimal metric as in LRT~\cite{CyrankaCDC18,gruenbacherArch19}.

\subsection{Intersection of the Reachsets}\label{sec:reacxh}

Another novelty in LRT-NG, is that it defines the next reachset as the intersection of an ellipsoid computed in the Optimal metric and a ball computed in the Euclidean metric. This considerably reduces the volume and therefore increases the time horizon of the computed reachtube.

The integration of the interval deformation gradient $[\calF_{i+1}]$ is based on the interval enclosure $[\calX_i]$ of the current reachset $B_{M_i}\left(\chi_{t_0}^{t_i}(x_0),\delta_i\right)\subseteq [\calX_i]$, such that the tighter $[\calX_i]$, the tighter the next deformation gradient $[\calF_{i+1}]$ is. When computing the SF $\Lambda_{0,i+1}$ for time $t_{i+1}$, the singular value of $A_{i+1}F_{i+1}A_0^{-1}$ is maximized over all $F_{i+1}\in[\calF_{i+1}]$. Hence it is crucial that $[\calF_{i+1}]$ and thus also $[\calX_i]$ is as tight as possible to decrease not only the volume of the reachset at time $t_{i+1}$, but also the volume of the reachsets at all times $t\ge t_{i+1}$.
%

An effective way of getting a much tighter conservative bound $[\calX_i]$ is taking the intersection of the ellipsoid in the optimal metric $\hatM_i$, and the ball in Euclidean metric.

As one can clearly see in Fig.~\ref{fig:intersectionPlot}, this is a considerable improvement. The blue ellipsoids are already in the optimal metric $\hatM_i$ (representing the real shape of the set of reachable states), but we can still get much tighter reachsets by using the intersections technique presented in this chapter, when computing the interval enclosure $[\calX_i]$. That new approach is conservative, as proven in the following proposition.
\begin{figure}[t]
    \centering
    \includegraphics[width=0.45\textwidth]{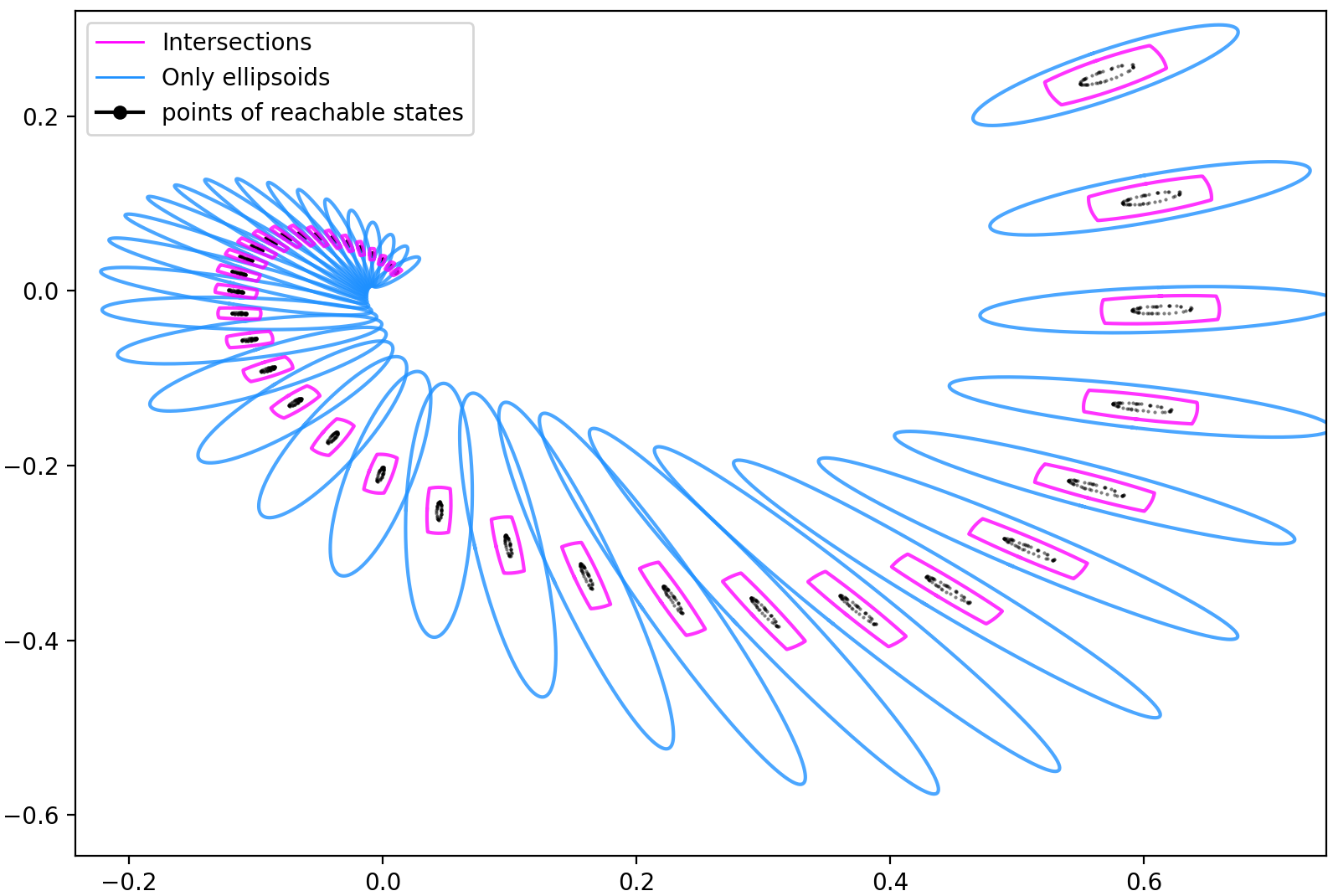}
    \vspace{-1.ex}
    \caption{Reachtubes for Van der Pol Oscillator ($t\in[3.4,5]$) for different $[\calX_i]$.}
    \label{fig:intersectionPlot}
\vspace{-3ex}
\end{figure}
\vspace{2ex}
\begin{lemma}\label{lemma:intersection}
\textit{Let ball $\calB_0\,{=}\,B_{M_0}(x_0,\delta_0)\subseteq\mathbb{R}^n$ in $M_0$-norm, with center $x_0$, and radius $\delta_0$, be the set of initial states, at time $t_0$. Let $\Lambda_{0,i}$ be the upper bound of the $M_{0,i}$-SF, and $\Lambda_{i,M_0}$ be the $M_0$-SF at time $t_i$. Let the radius $\delta_i$ be defined as in Eq.~\ref{eq:Lambda}, and $\delta_{i,M_0}$ as below:
\begin{align*}
    \delta_{i,M_0} &=  \Lambda_{i,M_0}\,\delta_0.
\end{align*}
Then, it holds that:
\begin{align*}
    \chi_{t_0}^{t_i}(x)&\in \calB_i\cap B_{M_0}\left(\chi_{t_0}^{t_i}(x_0),\delta_{i,M_0}\right),\quad \forall x\in\calB_0.
\end{align*}
The set of reachable states are conservatively enclosed by the intersection of the balls in the two different metrics.}
\end{lemma}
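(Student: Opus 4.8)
The plan is to recognize that the claimed intersection is simply the conjunction of two \emph{independent} conservative over-approximations of the reachable set $\{\chi_{t_0}^{t_i}(x):x\in\calB_0\}$, both centered at the exact flowed center $\chi_{t_0}^{t_i}(x_0)$: the ellipsoid $\calB_i=B_{M_i}(\chi_{t_0}^{t_i}(x_0),\delta_i)$ in the metric $M_i$, which is precisely the bound already justified in Section~\ref{sect:background}, and a second ball $B_{M_0}(\chi_{t_0}^{t_i}(x_0),\delta_{i,M_0})$ obtained by rerunning the very same estimate but never leaving the metric $M_0$. Since any point lying in both balls lies in their intersection, the lemma follows immediately once both inclusions are established.

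First I would re-derive the inclusion $\chi_{t_0}^{t_i}(x)\in\calB_i$. Because $\calB_0=B_{M_0}(x_0,\delta_0)$ is a ball in a norm it is convex, so for every $x\in\calB_0$ the segment $[x_0,x]$ stays inside $\calB_0$, and the mean-value inequality for vector-valued maps applies to $\chi_{t_0}^{t_i}$ along that segment, with Jacobian $F(t_i,\cdot)$. Combining it with the change-of-metric estimate \cite[Lemma 2]{Cyranka2017} and the enclosure $[\calF_i]\supseteq\{F(t_i,x):x\in\calB_0\}$ gives $\lVert\chi_{t_0}^{t_i}(x)-\chi_{t_0}^{t_i}(x_0)\rVert_{M_i}\le\lVert[\calF_i]\rVert_{M_{0,i}}\,\delta_0\le\Lambda_{0,i}\,\delta_0=\delta_i$ by Eq.~\eqref{eq:Lambda}, i.e. $\chi_{t_0}^{t_i}(x)\in\calB_i$; this is exactly \cite[Theorem 1]{Cyranka2017} as reproduced in the background.

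Then I would repeat the argument in the fixed metric $M_0$. Here the relevant induced operator norm is the one with $\lVert\cdot\rVert_{M_0}$ on both sides, namely $\lVert F\rVert_{M_0}=\lVert A_0FA_0^{-1}\rVert_2$, and $\Lambda_{i,M_0}$ is by definition an upper bound for $\lVert[\calF_i]\rVert_{M_0}$ over the whole interval matrix. The same segment-wise mean-value inequality on $[x_0,x]\subseteq\calB_0$ then yields $\lVert\chi_{t_0}^{t_i}(x)-\chi_{t_0}^{t_i}(x_0)\rVert_{M_0}\le\lVert[\calF_i]\rVert_{M_0}\,\delta_0\le\Lambda_{i,M_0}\,\delta_0=\delta_{i,M_0}$, i.e. $\chi_{t_0}^{t_i}(x)\in B_{M_0}(\chi_{t_0}^{t_i}(x_0),\delta_{i,M_0})$. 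Intersecting the two memberships finishes the proof.

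I expect no deep obstacle: the whole lemma is two applications of one inequality. The only points needing care are (i) invoking convexity of the norm-ball $\calB_0$ to legitimize the segment-wise mean-value estimate, and (ii) being explicit that $\Lambda_{i,M_0}$ bounds the $M_0$-induced operator norm over \emph{all} of $[\calF_i]$ and not merely at the center, so the second ball is genuinely conservative. It is also worth stressing that both balls share the same exact center $\chi_{t_0}^{t_i}(x_0)$, which is what makes the intersection a well-defined over-approximation and generally strictly tighter than either ball alone.
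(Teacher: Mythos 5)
Your proposal is correct and follows essentially the same route as the paper: the paper's proof simply invokes the conservativity result (\cite[Theorem 1]{Cyranka2017}) twice, once with the metric $M_i$ and once with $M_0$, and intersects the two enclosures, which is exactly your two applications of the mean-value/change-of-metric estimate. You merely unfold the cited theorem into its constituent inequality, and your two care points (convexity of $\calB_0$ and $\Lambda_{i,M_0}$ bounding the norm over all of $[\calF_i]$) are sound but implicit in the paper's citation.
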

\begin{proof}
The conservativity of the LRT-NG is independent of the choice of metric. Therefore we can choose $\hat{M}_i\,{=}\,M_0$ and $\tilde{M}_i\,{=}\,M_i$ to be two different metrics at time $t_i$. Using Theorem~\cite[Theorem 1]{Cyranka2017}, we know that $\chi_{t_0}^{t_i}(x)\,{\in}\,B_{\hat{M}_i}(\chi_{t_0}^{t_i}(x_0),\rd_i(\hat{M}_i))$ and that $\chi_{t_0}^{t_i}(x)\,{\in}\,B_{\tilde{M}_i}(\chi_{t_0}^{t_i}(x_0),\rd_i(\tilde{M}_i))$. Hence $\chi_{t_0}^{t_i}(x)$ is also in the intersection of the ellipsoid and the ball, for all $x\,{\in}\,\calB_0$.
\end{proof}
\vspace{2ex}
Thus Lemma~\ref{lemma:intersection} allows us to dramatically reduce the volume of the reachtube and combat the wrapping effect in a way that has not been considered before by bloating-based techniques~\cite{Cyranka2017,CyrankaCDC18,gruenbacherArch19,fansimul,Fan2016,Fan2015}.
\subsection{Propagation of the Center}\label{sec:center}

In this section we show how to conservatively propagate the center and avoid the wrapping effect incurred due to interval arithmetic over boxes. The main idea is to add the box size to the radius of the ellipsoid and propagate points.

As ball $\calB_i\,{=}\,B_{M_i}(\chi_{t_0}^{t_i}(x_0),\rd_i)$ at time $t_i$ in Theorem~\cite[Theorem 1]{Cyranka2017}, we need first to propagate $x_0$ to $\chi_{t_0}^{t_i}(x_0)$ in a validated fashion. This allows us to then describe this conservative reachset. In LRT~\cite{Cyranka2017,CyrankaCDC18,gruenbacherArch19} $x_0$ was propagated in the first time-step to $[y_1]$, a validated interval including the solution $\chi_{t_0}^{t_1}(x_0)$. After that, the interval $[y_i]$ which includes the solution $\chi_{t_0}^{t_i}(x_0)$, was propagated to $[y_{i+1}]$ from time $i$ to time $i\,{+}\,1$. This way, it was possible to define the ellipsoidal bound by using that interval as the center of ball $B_{M_i}([y_i],\rd_i)$, but with a loss in tightness: it accumulated the wrapping effect due to interval arithmetic.

In this paper we avoid the wrapping effect of the center propagation in interval arithmetic by restarting the center to the midpoint $x_i=\midd{[y_i]}$ in each time-step and adding an upper bound of the error $\lVert \chi_{t_0}^{t_i}(x_0)\,{-}\, x_i\rVert_{M_i}\,{\leq}\,\sigma_{i,M_i}$ to the radius $\rd_i$ of the ellipsoid. Using the new definition of the radius $\rd_i\,{=}\, \Lambda_{0,i}\,\rd_0\,{+}\,\sigma_{i,M_i}$, we are able to define the ball enclosing the reachable states at time $t_i$ as follows:
$$\calB_i = B_{M_i}(x_i,\rd_i),$$
This allows us to use in each time-step a point $x_i$ as the center of the ball, instead of using an interval $[y_i]$.
In order to properly use the midpoint of the propagated center as the new center of the ball, we need to calculate the maximum error in each step. The following Lemma provides a formula for calculating the error $\sigma_{i,M_i}$, with $\lVert \chi_{t_0}^{t_i}(x_0)\,{-}\,x_i\rVert_{M_i}\,{\leq}\,\sigma_{i,M_i}$.
\vspace{2ex}
\begin{lemma}\label{lemma:sigma}
\textit{Let $x_0$ be the center of the set of initial states at the initial time $t_0$, and let the interval $[y_i]$ be a validated interval bound for the solution set of Eq.~\eqref{cauchy} with initial condition $(t_{i-1}, x_{i-1})$, thus $\chi_{t_{i-1}}^{t_i}(x_{i-1})\,{\subseteq}\,[y_{i}]$. Let $x_i\,{=}\,\midd([y_i])$ be the middle of the validated interval bound, and $\epsilon_{i, M_i}\,{=}\,\radd(A_i\,[y_i])$ the radius of the interval $[y_i]$ in metric $M_i$, corresponding to the maximum propagation error of the center in every dimension.
Assume there exists a compact enclosure $[\mathfrak{C}]\subseteq\mathbb{R}^{n\times n}$ for the gradients in the $\sigma$-area around the center such that:
$$(\partial_x \chi_{t_{i-1}}^{t_{i}})(x)\in [\mathfrak{C}], \quad \forall x\in B_{M_{i-1}}(x_{i-1},\sigma_{i-1,M_{i-1}}).$$
Let $\Lambda_{i-1,i}$ be the following upper bound:
\begin{align*}
    \Lambda_{i-1,i} \geq \lVert C \rVert_{M_{i-1,i}},\quad  \forall C\in[\mathfrak{C}],
\end{align*}
which is the stretching factor from time $t_{i-1}$ to time $t_i$ only in this small region around the center within the error radius $\sigma_{i-1,M_{i-1}}$.
Starting with $\sigma_{0,M_0}\,{=}\,0$, let the error at time $t_i$ be defined as follows:
\begin{align}\label{eqn:defsigma}
    \sigma_{i,M_i} = \Lambda_{i-1,i}\,\sigma_{i-1,M_{i-1}} + \epsilon_{i,M_i}
\end{align}
Then it holds that: 
\begin{align}\label{eqn:sigma}
    \chi_{t_0}^{t_i}(x_0)&\in B_{M_i}(x_i,\sigma_{i,M_i}).
\end{align}
}

\end{lemma}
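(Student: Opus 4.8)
The plan is to prove the claim by induction on $i$, with the base case $i=0$ being trivial since $\sigma_{0,M_0}=0$ and $x_0$ is by construction the center of the initial ball, so $\chi_{t_0}^{t_0}(x_0)=x_0\in B_{M_0}(x_0,0)$. For the inductive step I would assume the statement~\eqref{eqn:sigma} holds at time $t_{i-1}$, i.e.\ $\chi_{t_0}^{t_{i-1}}(x_0)\in B_{M_{i-1}}(x_{i-1},\sigma_{i-1,M_{i-1}})$, which means $\lVert\chi_{t_0}^{t_{i-1}}(x_0)-x_{i-1}\rVert_{M_{i-1}}\le\sigma_{i-1,M_{i-1}}$. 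The goal is to bound $\lVert\chi_{t_0}^{t_i}(x_0)-x_i\rVert_{M_i}$ by $\sigma_{i,M_i}$.

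The key step is a triangle-inequality split in the $M_i$-norm: write $\chi_{t_0}^{t_i}(x_0)-x_i=\big(\chi_{t_0}^{t_i}(x_0)-\chi_{t_{i-1}}^{t_i}(x_{i-1})\big)+\big(\chi_{t_{i-1}}^{t_i}(x_{i-1})-x_i\big)$. The second term is controlled directly by the hypothesis that $x_i=\midd([y_i])$ with $\chi_{t_{i-1}}^{t_i}(x_{i-1})\subseteq[y_i]$: by definition $\lVert\chi_{t_{i-1}}^{t_i}(x_{i-1})-x_i\rVert_{M_i}=\lVert A_i(\chi_{t_{i-1}}^{t_i}(x_{i-1})-x_i)\rVert_2\le\radd(A_i\,[y_i])=\epsilon_{i,M_i}$, since $A_i x_i=A_i\midd([y_i])=\midd(A_i[y_i])$ and $A_i\chi_{t_{i-1}}^{t_i}(x_{i-1})\in A_i[y_i]$. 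For the first term I would note that $\chi_{t_0}^{t_i}(x_0)=\chi_{t_{i-1}}^{t_i}(\chi_{t_0}^{t_{i-1}}(x_0))$ by the flow (semigroup) property, so the difference is $\chi_{t_{i-1}}^{t_i}(\chi_{t_0}^{t_{i-1}}(x_0))-\chi_{t_{i-1}}^{t_i}(x_{i-1})$, i.e.\ the image under the flow map $\chi_{t_{i-1}}^{t_i}$ of two points, $\chi_{t_0}^{t_{i-1}}(x_0)$ and $x_{i-1}$, both of which lie in $B_{M_{i-1}}(x_{i-1},\sigma_{i-1,M_{i-1}})$ by the inductive hypothesis (the second trivially as the center). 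Now I apply the mean-value inequality for vector-valued functions together with the change-of-metric lemma (exactly as in the Background section, citing~\cite[Lemma 2]{Cyranka2017}), using that $(\partial_x\chi_{t_{i-1}}^{t_i})(x)\in[\mathfrak{C}]$ for all $x$ in that ball, to get
\begin{align*}
\lVert\chi_{t_{i-1}}^{t_i}(\chi_{t_0}^{t_{i-1}}(x_0))-\chi_{t_{i-1}}^{t_i}(x_{i-1})\rVert_{M_i}&\le\lVert[\mathfrak{C}]\rVert_{M_{i-1,i}}\,\lVert\chi_{t_0}^{t_{i-1}}(x_0)-x_{i-1}\rVert_{M_{i-1}}\\
&\le\Lambda_{i-1,i}\,\sigma_{i-1,M_{i-1}}.
\end{align*}
Adding the two bounds gives $\lVert\chi_{t_0}^{t_i}(x_0)-x_i\rVert_{M_i}\le\Lambda_{i-1,i}\,\sigma_{i-1,M_{i-1}}+\epsilon_{i,M_i}=\sigma_{i,M_i}$ by~\eqref{eqn:defsigma}, closing the induction.

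The main obstacle I anticipate is the careful handling of the mean-value step on the first term: one must ensure the straight segment between $\chi_{t_0}^{t_{i-1}}(x_0)$ and $x_{i-1}$ stays inside $B_{M_{i-1}}(x_{i-1},\sigma_{i-1,M_{i-1}})$ so that $[\mathfrak{C}]$ legitimately encloses all the Jacobians encountered — this is immediate here because the ball is convex and contains both endpoints, but it is the place where the $\sigma$-area enclosure hypothesis on $[\mathfrak{C}]$ is genuinely used, and it is worth stating explicitly. A secondary, more bookkeeping-level subtlety is keeping the metric indices straight: the flow map goes from the $M_{i-1}$-geometry at time $t_{i-1}$ to the $M_i$-geometry at time $t_i$, so the relevant operator norm of the gradient is the mixed $M_{i-1,i}$-norm $\lVert\cdot\rVert_{M_{i-1,i}}=\lVert A_i\cdot A_{i-1}^{-1}\rVert_2$, matching the definition of $\Lambda_{i-1,i}$; conflating this with an unmixed norm would break the estimate. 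Everything else is a routine triangle inequality plus the definition of the midpoint and radius of an interval box in a weighted norm.
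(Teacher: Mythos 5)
Your proposal is correct and follows essentially the same route as the paper's own proof: induction on $i$, the same triangle-inequality decomposition of $\chi_{t_0}^{t_i}(x_0)-x_i$ via the semigroup property, the mean-value inequality in the mixed $M_{i-1,i}$-norm for the first term, and the midpoint/radius bound $\epsilon_{i,M_i}$ for the second. Your explicit remarks on convexity of the $\sigma$-ball and on using the $\sigma$-area enclosure $[\mathfrak{C}]$ (rather than a gradient bound over all of $\calB_i$) are, if anything, slightly more careful than the paper's write-up.
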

\begin{proof}
We prove by induction, that this holds for all $i\in\mathbb{N}$. \emph{For time $t_1$}:
\begin{align*}
    \lVert \chi_{t_0}^{t_1}(x_0) - x_1\rVert_{M_1} &= \lVert A_1 \, (\chi_{t_0}^{t_1}(x_0) - x_1)\rVert_2 \leq\\
    &\max_{x\in[y_1]}\lVert A_1\,(x - x_1)\rVert_2 = \epsilon_{1,M_1} = \sigma_{1,M_1},
\end{align*}
which is equivalent to $\chi_{t_0}^{t_1}(x_0)\,{\in}\,B_{M_1}(x_1,\sigma_{1,M_1})$. \emph{Now assuming~\eqref{eqn:sigma} holds for $t_i$}. We show that $\chi_{t_0}^{t_{i+1}}(x_0)\,{\in}\,B_{M_{i+1}}(x_{i+1},\sigma_{i+1,M_{i+1}})$ holds for $t_{i+1}$:
\begin{align*}
    &\lVert \chi_{t_0}^{t_{i+1}}(x_0) - x_{i+1}\rVert_{M_{i+1}}\leq \mathrm{\{triangle\ ineq\}}\\
    &\lVert \chi_{t_0}^{t_{i+1}}(x_0) - \chi_{t_i}^{t_{i+1}}(x_i)\rVert_{M_{i+1}} + \lVert \chi_{t_i}^{t_{i+1}}(x_i) - x_{i+1}\rVert_{M_{i+1}}\leq\\
    &\lVert \chi_{t_i}^{t_{i+1}}\left(\chi_{t_0}^{t_{i}}(x_0)\right) - \chi_{t_i}^{t_{i+1}}(x_i)\rVert_{M_{i+1}} + \epsilon_{i+1,M_{i+1}}\leq\\
    &\max_{x\in \calB_i}\lVert(\partial_x \chi_{t_i}^{t_{i+1}})(x)\rVert_{M_{i,i+1}}\lVert \chi_{t_0}^{t_{i}}(x_0) - x_i \rVert_{M_i} + \epsilon_{i+1,M_{i+1}}\\
    &\leq \mathrm{\{ind\ hyp\}}\ \Lambda_{i,i+1}\, \sigma_{i,M_i} + \epsilon_{i+1,M_{i+1}}=\sigma_{i+1,M_{i+1}}.
\vspace{-2ex}
\end{align*}
\end{proof}
Following theorem proves the correctness of LRT-NG. The center of the ball is the midpoint $x_i$ of the validated interval bound of the propagation $\chi_{t_{i-1}}^{t_i}(x_{i-1})$. This is close to the real center $\chi_{t_0}^{t_i}(x_0)$, and adds the error to the radius.
\vspace{2ex}
\begin{theorem}\label{thmrad}
\textit{Let the ball $\calB_0\,{=}\,B_{M_0}(x_0,\delta_0)\subseteq\mathbb{R}^n$, be the set of initial states at $t_0$, $\Lambda_{0,i}$ the upper bound of $M_{0,i}$-SF, and the new radius $\delta_i$ be defined as:
\begin{align*}
\delta_i &= \Lambda_{0,i}\, \delta_0 + \sigma_{i,M_i},
\end{align*}
with $\sigma_{i,M_i}$ as defined in Eq.~\eqref{eqn:defsigma}.
Then, the propagation error of the center:
\begin{align}\label{eqn:radius}
    \chi_{t_0}^{t_i}(x)&\in B_{M_i}\left(x_i,\delta_i\right)=\calB_i,\quad \forall x\in\calB_0
\end{align}
can be absorbed in the radius $\delta_i$ of each time-step.}
\end{theorem}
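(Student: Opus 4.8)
The plan is to decompose the displacement $\chi_{t_0}^{t_i}(x)-x_i$, measured in the $M_i$-norm, into two pieces via the triangle inequality: the spread of the reachable set around the \emph{true} center $\chi_{t_0}^{t_i}(x_0)$, and the offset between that true center and the numerically available midpoint $x_i=\midd([y_i])$. Both pieces have already been bounded earlier in the paper: the first by the Lagrangian stretching-factor estimate recalled in Section~\ref{sect:background} (ultimately \cite[Theorem 1]{Cyranka2017}), and the second by Lemma~\ref{lemma:sigma}. So this theorem is really a ``gluing'' step rather than a new estimate.

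Concretely, I would fix an arbitrary $x\in\calB_0$ and write
\[
\lVert\chi_{t_0}^{t_i}(x)-x_i\rVert_{M_i}\leq\lVert\chi_{t_0}^{t_i}(x)-\chi_{t_0}^{t_i}(x_0)\rVert_{M_i}+\lVert\chi_{t_0}^{t_i}(x_0)-x_i\rVert_{M_i},
\]
which is legitimate because $\lVert\cdot\rVert_{M_i}$ is a genuine norm ($M_i=A_i^TA_i\succ0$), so the sets $B_{M_i}(\cdot,\cdot)$ are honest metric balls. For the first summand, since $\calB_0=B_{M_0}(x_0,\delta_0)$ has $M_0$-radius $\delta_0$, the mean-value / change-of-metric estimate of Section~\ref{sect:background} gives $\lVert\chi_{t_0}^{t_i}(x)-\chi_{t_0}^{t_i}(x_0)\rVert_{M_i}\leq\lVert[\calF_i]\rVert_{M_{0,i}}\,\delta_0\leq\Lambda_{0,i}\,\delta_0$. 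For the second summand, Lemma~\ref{lemma:sigma} (applied with $\sigma_{0,M_0}=0$ and the recursion~\eqref{eqn:defsigma}) gives exactly $\chi_{t_0}^{t_i}(x_0)\in B_{M_i}(x_i,\sigma_{i,M_i})$, i.e.\ $\lVert\chi_{t_0}^{t_i}(x_0)-x_i\rVert_{M_i}\leq\sigma_{i,M_i}$. Adding the two estimates yields $\lVert\chi_{t_0}^{t_i}(x)-x_i\rVert_{M_i}\leq\Lambda_{0,i}\,\delta_0+\sigma_{i,M_i}=\delta_i$, hence $\chi_{t_0}^{t_i}(x)\in B_{M_i}(x_i,\delta_i)=\calB_i$; since $x\in\calB_0$ was arbitrary, this is the assertion~\eqref{eqn:radius}.

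I do not anticipate a deep obstacle here, since the argument is essentially triangle-inequality bookkeeping. The one point that needs care is the consistency of the metrics: both summands must be expressed in the \emph{same} norm $M_i$ at time $t_i$ — the factor $\Lambda_{0,i}$ is by definition the $M_{0,i}$-SF, which is precisely what transports the $M_0$-ball of initial states into a bound in the $M_i$-norm, while $\sigma_{i,M_i}$ is by construction already measured in $M_i$ — so the two bounds are additive in $\lVert\cdot\rVert_{M_i}$ without any further change of coordinates. One should also record that the hypotheses of Lemma~\ref{lemma:sigma}, namely the existence of the gradient enclosures $[\mathfrak{C}]$ on the $\sigma$-tubes around the propagated centers and of the validated interval bounds $[y_i]$ with $\chi_{t_{i-1}}^{t_i}(x_{i-1})\subseteq[y_i]$, are in force, so that $\sigma_{i,M_i}$ (and hence $\delta_i$) is well defined.
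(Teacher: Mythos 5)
Your proof is correct and follows essentially the same route as the paper's: a triangle inequality splitting $\chi_{t_0}^{t_i}(x)-x_i$ through the true center $\chi_{t_0}^{t_i}(x_0)$, with the first term bounded by $\Lambda_{0,i}\,\delta_0$ via the stretching-factor estimate of \cite[Theorem 1]{Cyranka2017} and the second by $\sigma_{i,M_i}$ via Lemma~\ref{lemma:sigma}. If anything, your version is slightly more careful, since you consistently measure both summands in the $\lVert\cdot\rVert_{M_i}$ norm, whereas the paper's displayed inequality writes $\lVert\cdot\rVert_{M_{0,i}}$, a notation that is defined there only for matrices.
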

\begin{proof}
By triangle inequality $\lVert \chi_{t_0}^{t_{i}}(x)-x_{i}\rVert_{M_{0,i}}\leq \lVert \chi_{t_0}^{t_{i}}(x) - \chi_{t_0}^{t_{i}}(x_0)\rVert_{M_{0,i}} + \lVert \chi_{t_0}^{t_{i}}(x_0) - x_{i}\rVert_{M_{0,i}}$
By Theorem~\cite[Theorem 1]{Cyranka2017}, the first term is bounded by $\Lambda_{0,i}\,\delta_0$, and by Lemma~\ref{lemma:sigma}, the second is bounded by $\sigma_{i,M_{i}}$. Thus:
$\chi_{t_0}^{t_i}(x)\in B_{M_i}\left(x_i,\delta_i\right)=\calB_i,\ \forall x\in\calB_0$.
\end{proof}
\vspace{2ex}
\begin{figure}[t]
    \centering
    \includegraphics[scale=0.52]{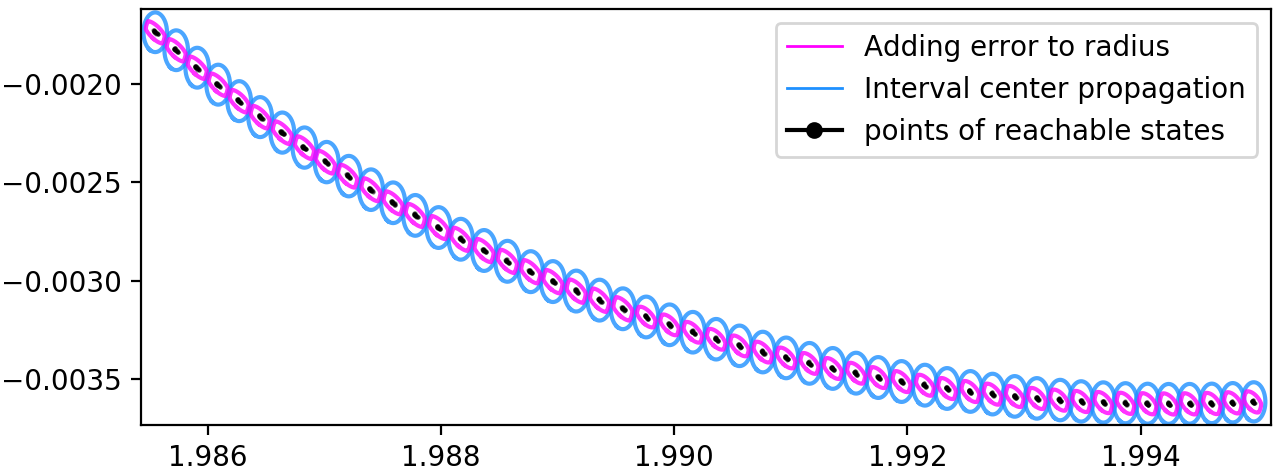}
    \vspace{-1.ex}
    \caption{Reachtube for Robotarm model ($t\in[9,10]$) obtained with the LRT (blue) and the LRT-NG (pink) method for center propagation.}
    \label{fig:centerPlot}
\vspace{-3ex}
\end{figure}

As one can see in Fig.~\ref{fig:centerPlot}, this method avoids the wrapping effect of the center propagation in interval arithmetic by restarting the center to the midpoint in each time-step and adding an upper bound of the error to the radius.

\section{Tool Advances in LRT-NG}\label{sect:ta-lrt-ng}
LRT-NG also entails several improvements from a tool-development point of view as compared to the previous implementation of LRT, presented in \cite{CyrankaCDC18}. 

\subsection{Runge-Kutta time integrator using Lohner's method}\label{sec:integrator}

To compute the stretching factor $\Lambda_{0,i}$ needed for Theorem~\ref{thmrad}, we have to conservatively compute a compact enclosure $[\calF_i]$ of the deformation gradients $F(t_i,x)$ for all $x\,{\in}\,\calB_0$. This requires the interval integration of the variational equations~\eqref{variational}. In the work on LRT~\cite{Cyranka2017,CyrankaCDC18}, the external CAPD library was used for this purpose, where the algorithm for time-propagation is the Taylor method. In this work, the LRT-NG tool is standalone.

For scalability reasons, we implemented in LRT-NG from scratch a rigorous procedure for time-propagation of the gradient using the Runge-Kutta (RK) time-stepping scheme \cite{Runge,Kutta,butcher}. Authors of \cite{vrk2016} also employ RK with the same motivation as ours, but they utilize zonotopes for wrapping control. We believe that our new approach, results in much lower computational complexity, while maintaining high accuracy. At present, we support only first, second, and fourth order RK methods, as higher orders while slower, considerably suffer from wrapping effects. Thus, they do not really lead to more improvements. 

\subsection{New Tool-Interface and Implementation}\label{sec:interface}

Other tool-development improvements concern the tool itself, combating wrapping effect, and interface details.

LRT-NG tool is  standalone. The only external libraries that we require are the IBEX and the Eigen C++ library \cite{ibex,eigen}. These libraries provide a convenient implementation of interval arithmetic, basic data structures, including vectors and matrices, eigenvalue computation, and symbolic differentiation routines, which we use for computing Jacobians.

To combat wrapping effects in the integration of~\eqref{variational} with RK, we implemented from scratch an advanced Lohner's QR method~\cite{Lo,lohnerOrig,capdTheory,CRLOHNER,gruenbacherArch19}. This required the efficient propagation of the gradients, and the removal of the dependencies on any non-fundamental external libraries.

Regarding the LRT-NG interface, to construct a reachtube for a new benchmark the user only needs to provide an external file with a function implementing the differential equations of the studied model and a file with the initial set of states for the reachtube construction. We believe this approach is the most flexible. First, it allows for fast and convenient reachtube construction using different models and parameters like time-step, time-horizon, RK order, without recompiling the program. Secondly, if the user wants to perform a verification of a large system of equations, then usually a simulation code is implemented first. Then, the script performing the simulation can be directly translated to the input to LRT-NG.

\section{Experimental Evaluation}\label{evaluation}

To assess the performance of LRT-NG in terms of accuracy and speed, we applied it to a comprehensive set of available nonlinear ODE benchmarks, and to two Neural ODEs we developed ourselves. The benchmarks are as follows: Brusselator, Van der Pol Oscillator, Robotarm, Dubins Car, Mitchell-Schaeffer Cardiac-cell, linearly controlled Cartpole, Quadcopter, Cartpole controlled with a Neural ODE~\cite{neuralODEInt}, and a Cartpole controlled with an LTC~\cite{ltc}. For comparison, we applied LRT and the latest versions of Flow* and CAPD to these benchmarks, too. We could not compare LRT-NG with C2E2, as the latter constructs its reachtubes in relation to a given set of unsafe states. In future work we plant to also compare LRT-NG with CORA~\cite{Althoff2018b}.

All tools were run on a MacBook Pro macOS Mojave $10.14.5$, with a $2.9$ GHz Intel Core i5 processor and $16$GB memory. For each tool we checked the volume of the reachtube, and the time it took to construct it. 
The volumes are given for each tool, benchmark, initial states, time increment, and  integration orders one, two and four, in Table~\ref{table:res}.

LRT-NG with 1st order is clearly tighter than Flow* and CAPD with 2nd order on all benchmarks, except Dubins Car. In the Robotarm, cardiac-cell, cartpole and quadcopter models, we have the best accuracy compared to Flow* and CAPD. Moreover, LRT-NG with RK1 is more accurate than the others using 4th order. Especially on the quadcopter model, and also on the cartpole model, we have by far the tightest reachtube. Flow* has the best accuracy for Brusselator, Van der Pol and Dubins Car, but LRT-NG with RK1 is not far away in terms of accuracy in the first two models. LRT-NG is the only one able to verify the cartpole controlled by a Neural ODE until $t=1$ and the cartpole controlled by an LTC recurrent neural network until $t=0.57$, which shows that we are on a good way of being able to verify cyber-physical system with Neural ODEs as controllers. The improvement over LRT is also clearly shown in Table~\ref{table:res}.


LRT-NG is moderately faster in all benchmarks than Flow*. However, CAPD is much faster compared to both. CAPD was developed over the past decade, and the current version is incredibly optimized, especially with respect to interval arithmetic. We use instead the public IBEX library~\cite{ibex}. We plan to further optimizate LRT-NG in the future.

For the details of all of the used benchmarks, i.e. the system of ODEs and interpretation of the variables, we defer to the appendix.

\begin{table}[t]
\scriptsize

\centering
\hspace*{-2ex}
\caption{\small Performance comparison with Flow*, CAPD and LRT. We use following labels for models M: B(2)- Brusselator, V(2)- Van der Pol oscillator, R(4)- Robotarm, D(3)- Dubins Car, M(2)- Mitchell Schaeffer cardiac-cell, C(4)- controlled cartpole, Q(17)-Quadcopter, C-N(12)- cartpole wt. Neural ODE, C-L(12)- cartpole wt. LTC RNN (number in parenthese denotes dimension). T: time horizon, dt: time step, r: initial radius in each dimension, AV: average volume of reachtubes in T, Fail: Volume blow-up before T; (1), (2) and (4) denotes the integration order. We mark in bold the best performers for low orders (1,2nd) and the higher order (4th).}
\tiny
\begin{center}
\bgroup
\def\arraystretch{2}%
\begin{tabular}{|C{0.3cm}|C{0.6cm}|C{0.25cm}|C{0.4cm}| C{1.2cm} |C{1.1cm}|C{1cm}|C{1.1cm}|}
\hline
\multirow{2}{*}{M}
& \multirow{2}{*}{dt}
& \multirow{2}{*}{T}
& \multirow{2}{*}{r}
&	
    \multicolumn{4}{c|}{AV}
      \\
      \cline{5-8}
     &&&& LRT-NG &  Flow* & CAPD & LRT\\
	\hline 

\multirow{2}{*}{B(2)}
& \multirow{2}{*}{$0.01$}
& \multirow{2}{*}{$9$}
& \multirow{2}{*}{$0.01$}
& $\mathbf{1.5\text{e-}{4}}$ \textbf{(1)} & $5.1\text{e-}{3}$ (2)& $4.3\text{e-}{4}$ (2) & $6.7\text{e-}{4}$ (1)\\
&&&& $1.4\text{e-}{4}$ (2, 4) & $\mathbf{9.8\textbf{e-}{5}}$ \textbf{(4)} & $3.6\text{e-}{4}$ (4) &$6.1\text{e-}{4}$ (2, 4)
\\
\hline

\multirow{2}{*}{V(2)}
& \multirow{2}{*}{$0.01$}
& \multirow{2}{*}{$40$}
& \multirow{2}{*}{$0.01$}
& $\mathbf{4.2\text{e-}{4}}$ \textbf{(1)} & $5.6\text{e-}{3}$ (2) & $1.5\text{e-}{3}$ & $4.1\text{e-}{3}$ (1)\\
&&&& $4.1\text{e-}{4}$ (2, 4) & $\mathbf{3.5\textbf{e-}{4}}$ \textbf{(4)} & (2, 4) & $3.7\text{e-}{3}$ (2, 4)
\\
\hline

\multirow{2}{*}{R(4)}
& \multirow{2}{*}{$0.01$}
& \multirow{2}{*}{$40$}
& \multirow{2}{*}{$0.005$}
& $\mathbf{8\text{e-}{11}}$ \textbf{(1,2)} & $1.1\text{e-}{9}$ (2) & $1.1\text{e-}{9}$ & \multirow{2}{*}{Fail}\\
&&&& $\mathbf{7.9\textbf{e-}{11}}$ \textbf{(4)} & $8.7\text{e-}{10}$ (4) & (2, 4) &
\\
\hline

\multirow{2}{*}{D(3)}
& \multirow{2}{*}{$0.00125$}
& \multirow{2}{*}{$15$}
& \multirow{2}{*}{$0.01$}
& $0.132$ (1) & $6.6037$ (2) & $\mathbf{0.1181}$ & $390$ (1)\\
&&&& $0.131$ (2, 4) & $\mathbf{4.5\textbf{e-}{2}}$ \textbf{(4)} & \textbf{(2, 4)} & $385$ (2, 4)
\\
\hline

\multirow{2}{*}{M(2)}
& \multirow{2}{*}{$0.01$}
& \multirow{2}{*}{$10$}
& \multirow{2}{*}{$10^{-4}$}
& $\mathbf{3.8\text{e-}{9}}$ \textbf{(1, 2)} & $3.9\text{e-}{8}$ (2) & $4.9\text{e-}{8}$ (2) & $3.2\text{e-}{8}$\\
&&&& $\mathbf{3.7\textbf{e-}{9}}$ \textbf{(4)} & $1.5\text{e-}{8}$ (4) & $4.4\text{e-}{8}$ (4) & (1, 2, 4)
\\
\hline

\multirow{2}{*}{C(4)}
& \multirow{2}{*}{$0.001$}
& \multirow{2}{*}{$10$}
& \multirow{2}{*}{$10^{-4}$}
& $\mathbf{8.4\text{e-}{17}}$ \textbf{(1)} & $1.1\text{e-}{11}$ (2) & $2.6\text{e-}{13}$ (2) & \multirow{2}{*}{Fail}\\
&&&& $\mathbf{7.2\textbf{e-}{17}}$ \textbf{(2,4)} & $7\text{e-}{13}$(4) & $2.6\text{e-}{13}$ (4) &
\\
\hline

\multirow{2}{*}{Q(17)}
& \multirow{2}{*}{$10^{-4}$}
& \multirow{2}{*}{$2$}
& \multirow{2}{*}{$0.005$}
& $\mathbf{3.21\text{e-}{54}}$ \textbf{(1)} & \multirow{2}{*}{$9.7\text{e-}{25}$ (2)} & \multirow{2}{*}{$1.7\text{e-}{31}$ (2)} & \multirow{2}{*}{Fail}\\
&&&& $\mathbf{9.31\textbf{e-}{56}}$ \textbf{(2)} & & &
\\
\hline

C-N(12) & $10^{-5}$ & $1$ & $10^{-4}$
& $\mathbf{3.9\text{e-}{27}}$ \textbf{(1, 2)}
& Fail
& Fail
& Fail
\\
\hline

C-L(12) & $10^{-6}$ & $0.35$ & $10^{-4}$
& $\mathbf{4.49\text{e-}{33}}$ \textbf{(1)}
& Fail
& Fail
& Fail
\\
\hline

\end{tabular}
\egroup
\vspace{1ex}
\vspace*{1ex}
\end{center}
\label{table:res}
\vspace{-10ex}
\end{table}

\textbf{Brusselator.}
Fig.~\ref{fig:Brusselator} compares the reachtubes computed by Flow* (in green), CAPD (in orange), and LRT-NG (in violet). Fig.~\ref{fig:BrusselatorEntire} illustrates the reachtubes in their entirety while the Fig.~\ref{fig:BrusselatorZoomIn} zooms in at the last portion of the reachtubes. Flow* blows up already before time $t=10$, while CAPD works up to $t=19.1$ and LRT-NG up to $t=22.6$. 
\begin{figure}
    \centering
    \begin{subfigure}[b]{0.45\textwidth}
        \centering
        \includegraphics[width=0.8\textwidth]{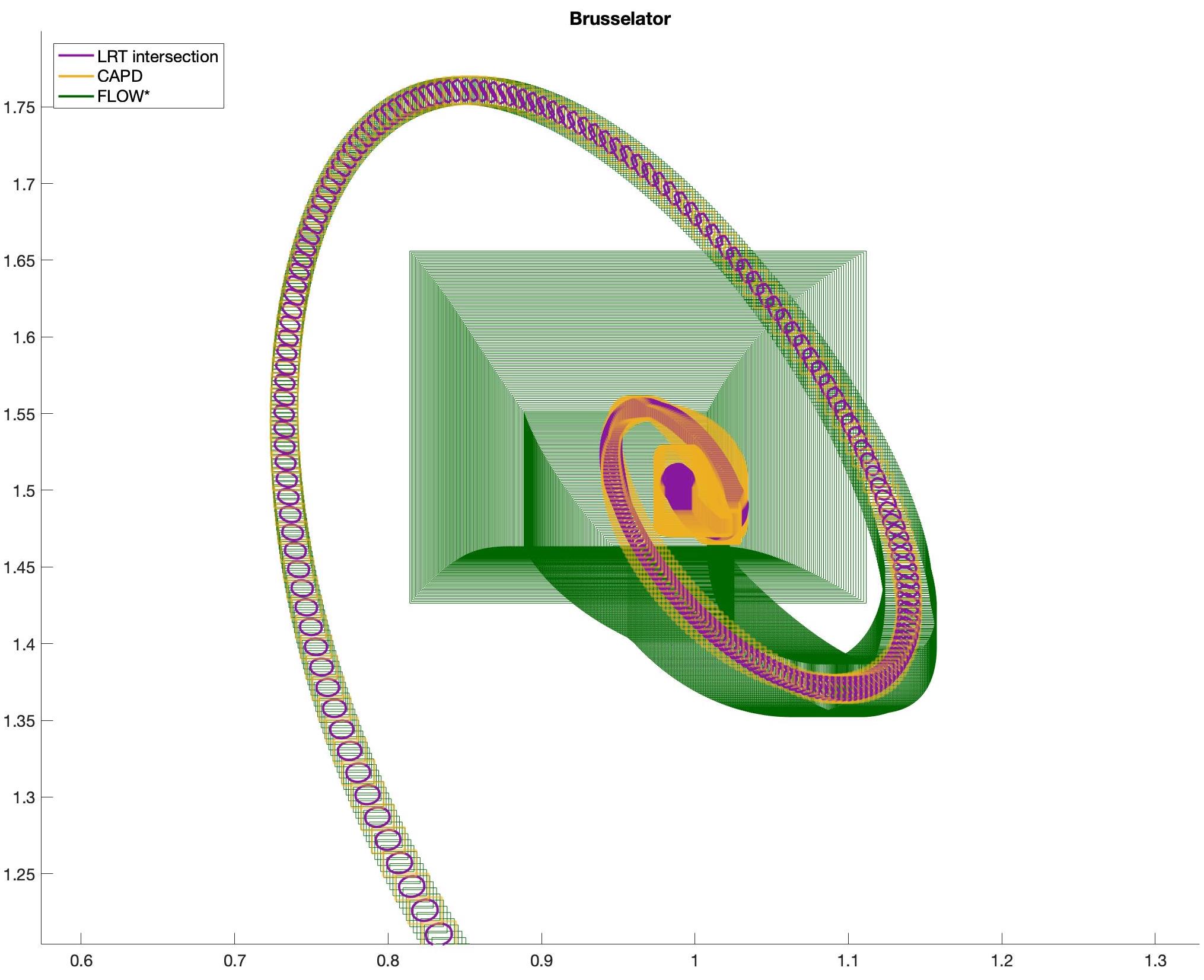}
        \caption{Entire Reachtube.}
        \label{fig:BrusselatorEntire}
    \end{subfigure}
    \begin{subfigure}[b]{0.45\textwidth}
        \centering
        \includegraphics[width=0.8\textwidth]{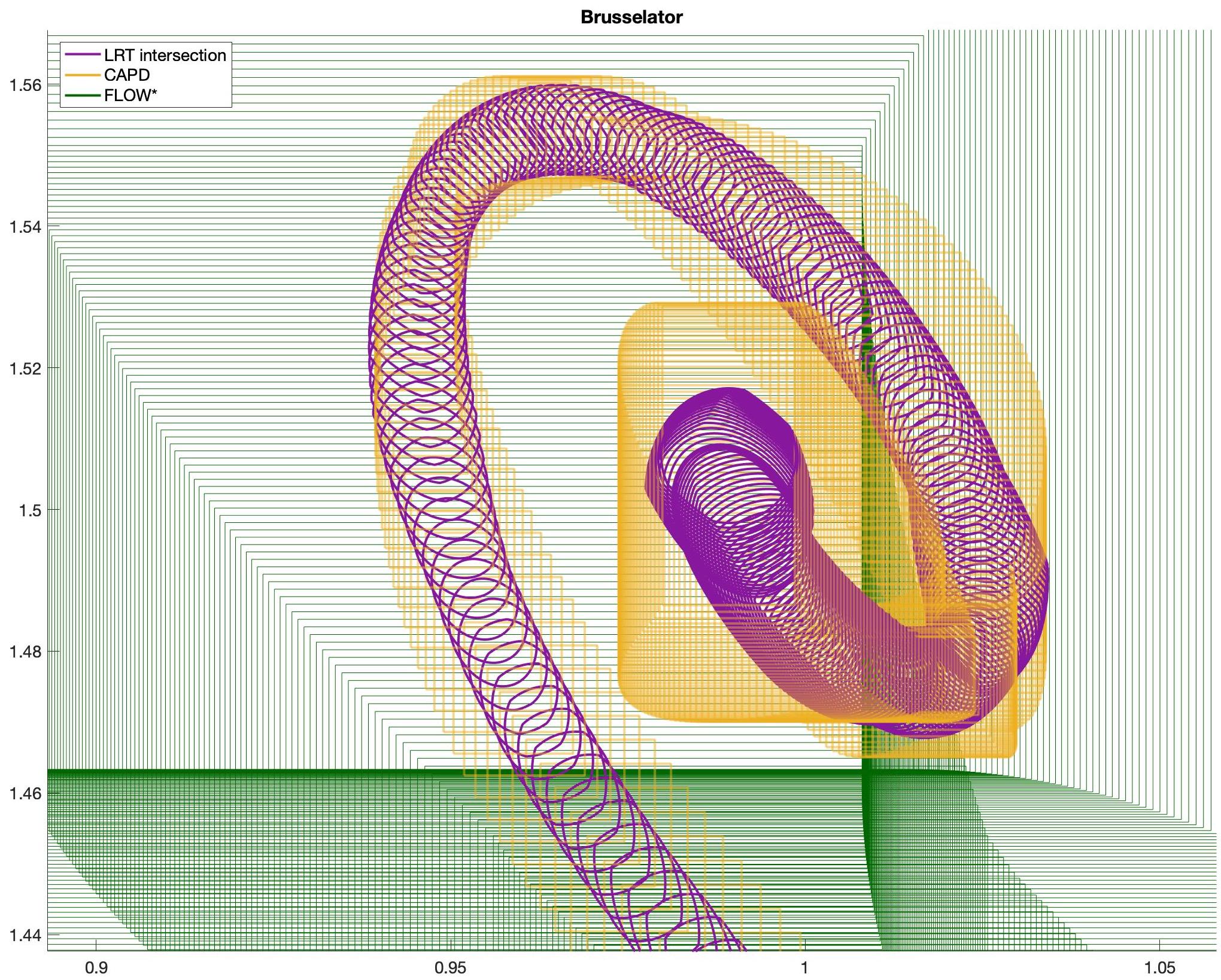}
        \caption{Reachtube zoomed to last portion.}
        \label{fig:BrusselatorZoomIn}
    \end{subfigure}
    \caption{Brusselator model until time $t=14$.}
    \label{fig:Brusselator}
\end{figure}
As clearly shown in these figures, LRT-NG is superior to both Flow* and CAPD on this Brusselator model.

\textbf{Van der Pol Oscillator.}
The comparison of the reachtubes constructed by Flow*, CAPD, and LRT-NG is shown in Fig.~\ref{fig:VanDerPol}.
\begin{figure}
    \centering
    \includegraphics[width=0.34\textwidth]{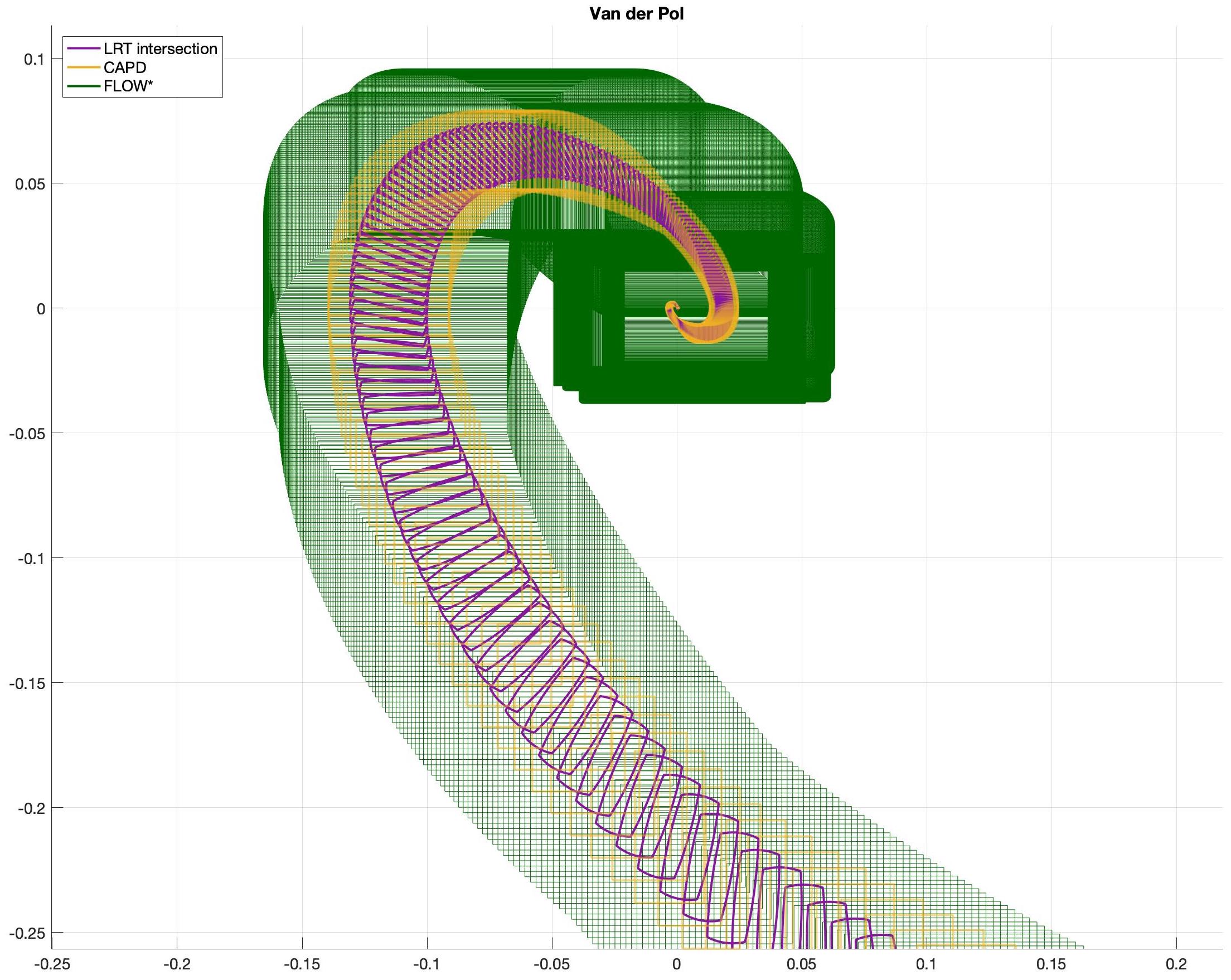}
    \caption{Van der Pol model until time $t=40$.}
    \label{fig:VanDerPol}
    \vspace*{-3ex}
\end{figure}
\begin{figure}
     \centering
     \includegraphics[width=0.34\textwidth]{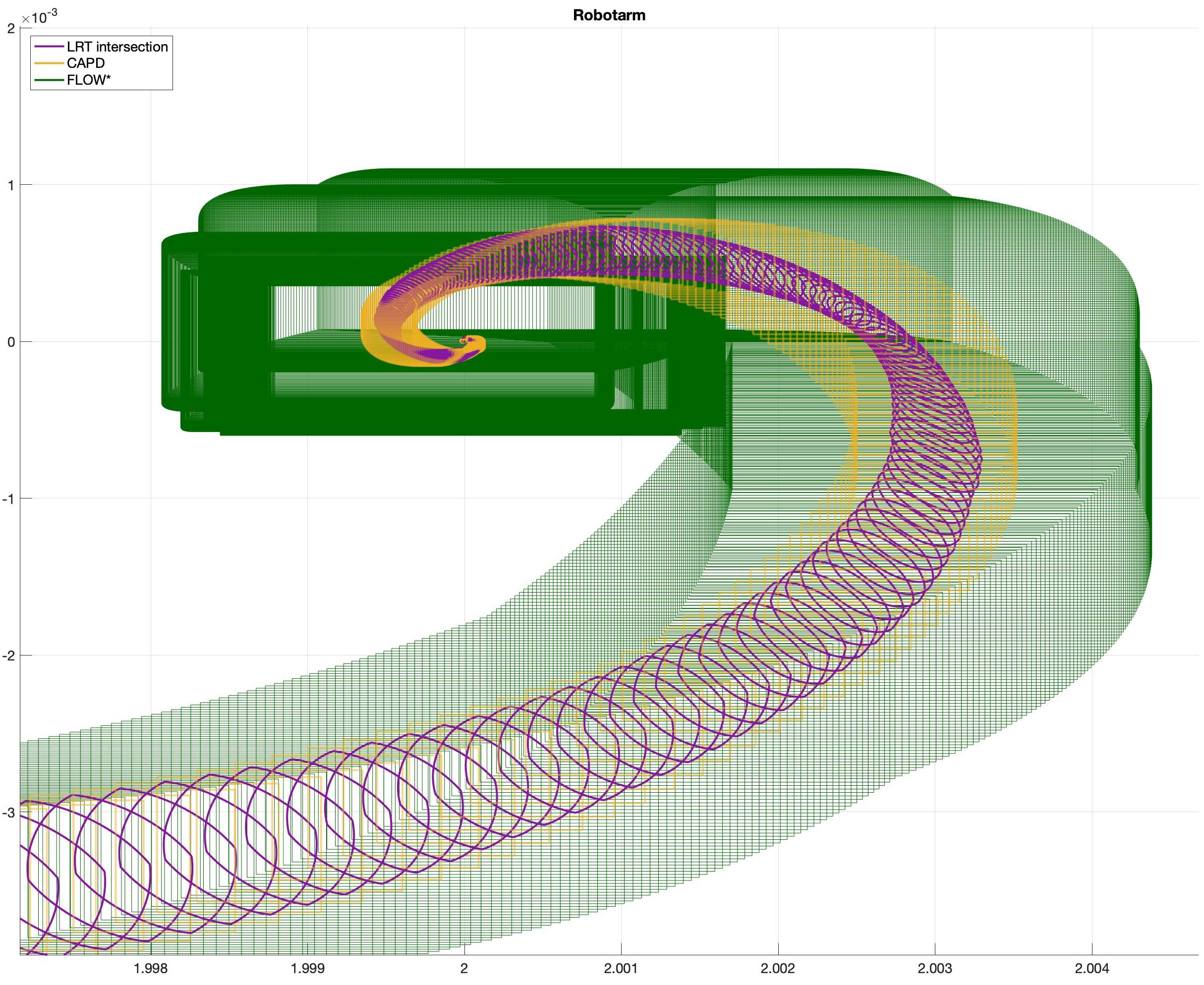}
     \caption{Robotarm model until time $t=40$.}
     \label{fig:Robotarm}
    \vspace*{-3ex}
\end{figure}
The figure zooms in on the last time-segment of the reachtube, where it is clearly visible that LRT-NG is superior to both Flow* and CAPD on this benchmark.

\textbf{Robotarm.}
The comparison of the reachtubes constructed by Flow*, CAPD, and LRT-NG is shown in Fig.~\ref{fig:Robotarm}.
The figure shows the projection of the last segment onto the $(x_1,x_4)$-plane. As one can see, LRT-NG is superior both to Flow* and to CAPD, and CAPD seems to be superior to Flow*. Although CAPD and LRT-NG have similar volumes for some time-steps, in other region (e.g. the one on the upper right side of the picture), LRT-NG is much tighter and thus more stable during time, which is an important property to avoid false positives when intersecting with bad states.

\textbf{Dubins Car.}
The comparison of the reachtubes constructed by Flow*, CAPD, and LRT-NG is illustrated in Fig.~\ref{fig:DubinsCar}. The figure illustrates the projection of the reachtubes onto the $(x_1\,{,}\,x_2)$-plane.
As one can see, Flow* blows up and CAPD seems to beat LRT-NG.
\begin{figure}[t]
    \centering
    \includegraphics[width=0.34\textwidth]{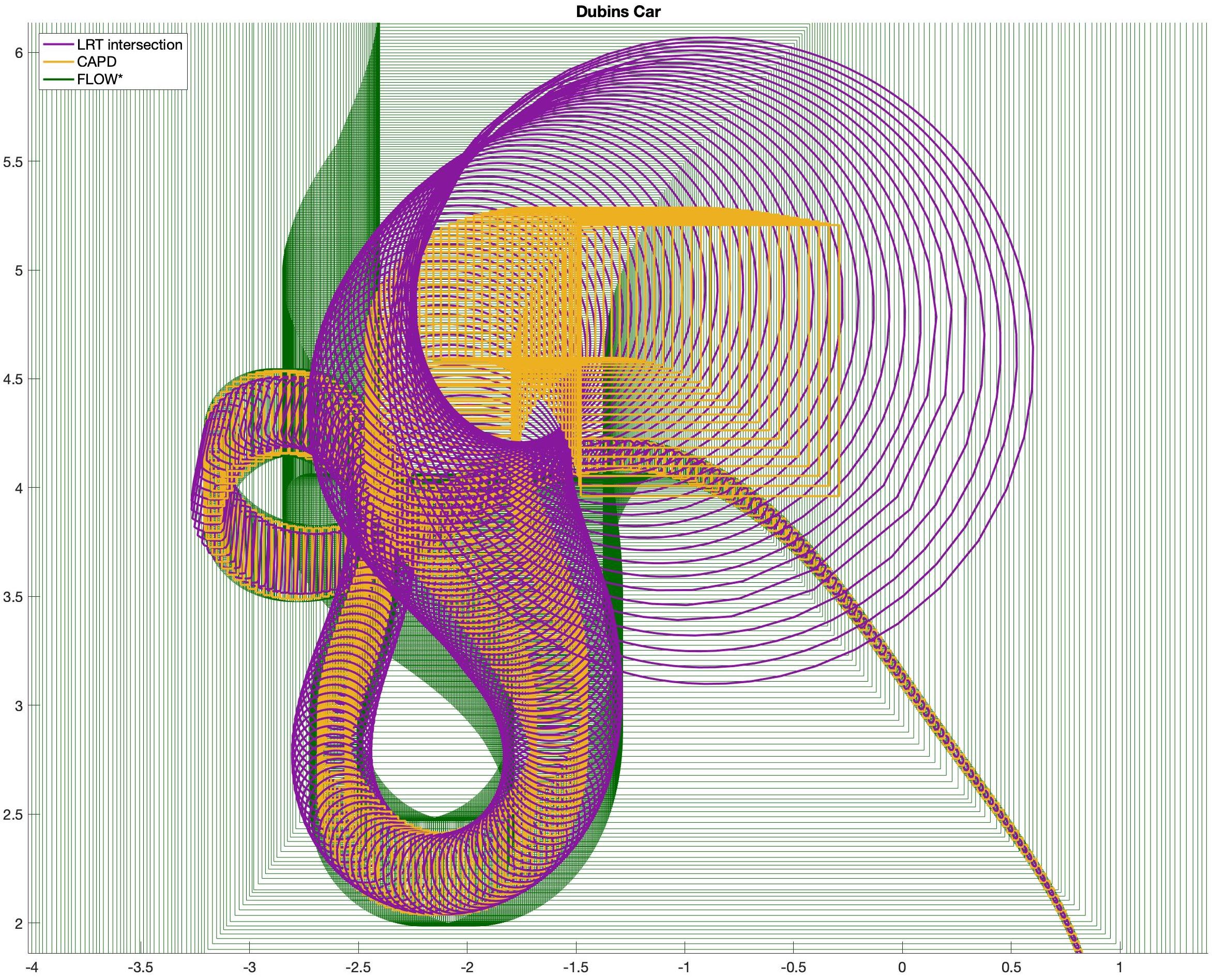}
    \caption{Dubins-Car model until time $t=15$.}
    \label{fig:DubinsCar}
\end{figure}
Hence, we consider it fair claim that CAPD performs consistently better than Flow* and LRT-NG on this benchmark.

\textbf{Cardiac-cell model.}
\begin{figure}
    \centering
    \begin{subfigure}[b]{0.45\textwidth}
        \centering
        \includegraphics[width=0.7\textwidth]{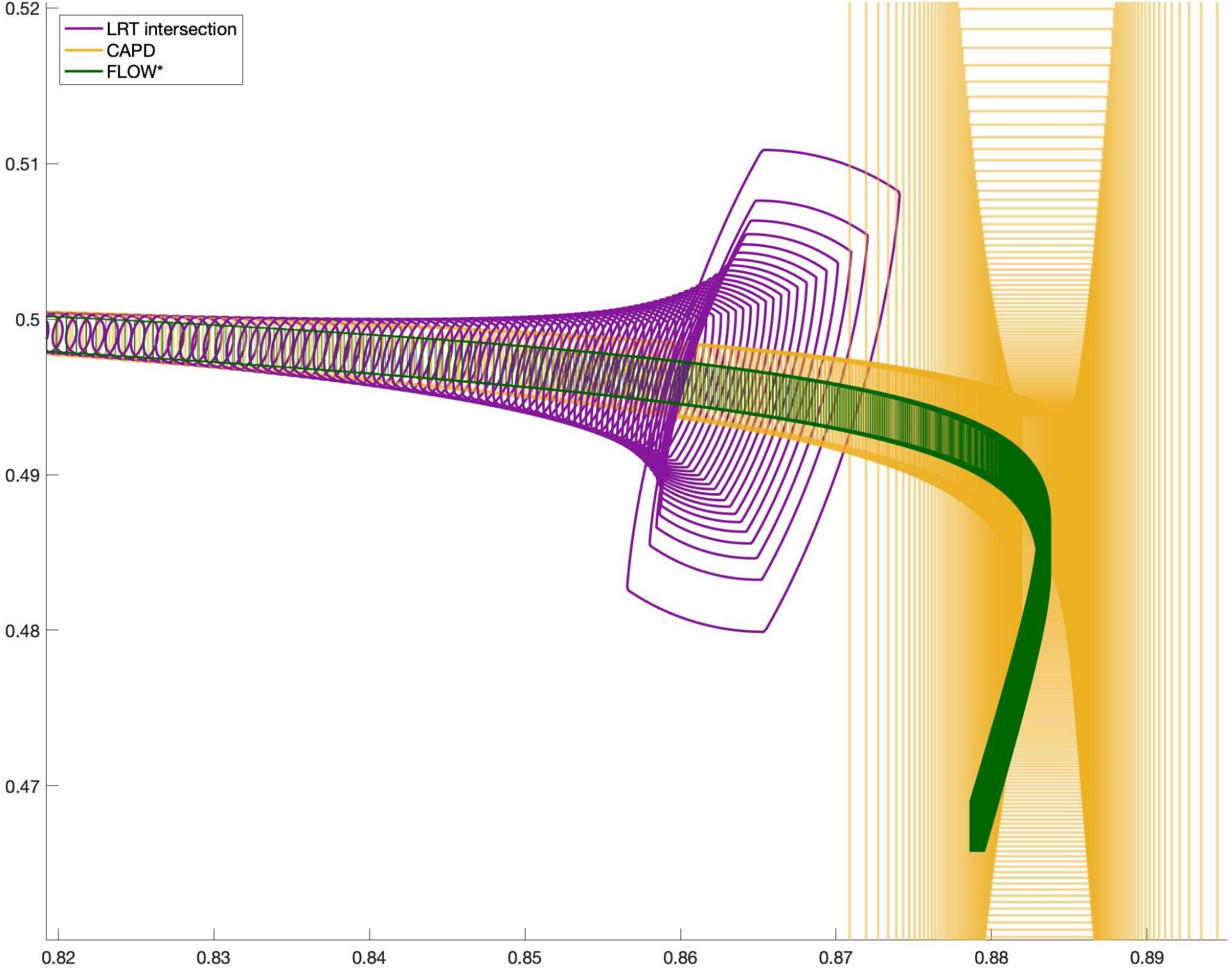}
        \caption{initial radius $\rd_0=0.001$}
        \label{fig:MSa}
    \end{subfigure}
    \begin{subfigure}[b]{0.45\textwidth}
        \centering
        \includegraphics[width=0.8\textwidth]{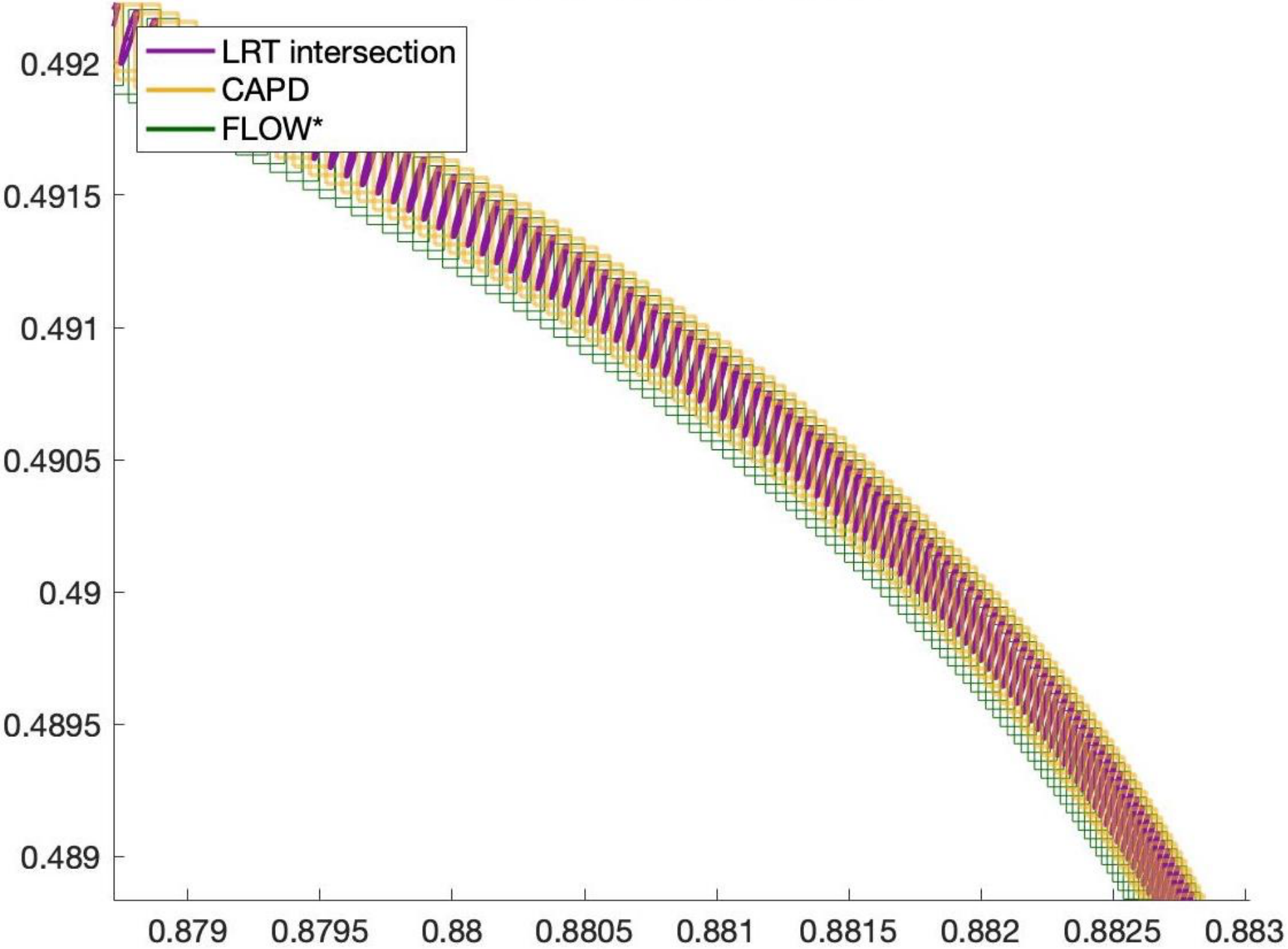}
        \caption{initial radius $\rd_0=10^{-4}$}
        \label{fig:MSb}
    \end{subfigure}
    \caption{Mitchell Schaeffer cardiac-cell model until $t\,{=}\,10$ with different initial radii.}
    \label{fig:MS}
    \vspace*{-3ex}
\end{figure}
The comparison of the reachtubes is shown in Fig.~\ref{fig:MS}. In Fig.~\ref{fig:MSa} we illustrate the reachtube for the initial radius $0.001$, where LRT-NG is the first one two blow up, followed by CAPD, in contrast to Flow* which stays tight until time $t=10$. Fig.~\ref{fig:MSb} illustrates the same model but with the smaller initial radius $10^{-4}$ and zoomed in on the last segment, where LRT-NG produces the tightest tube, followed by CAPD and Flow*.

\textbf{Cartpole with linearly-stabilizing controller.}

\begin{figure}
    \centering
    \includegraphics[width=0.34\textwidth]{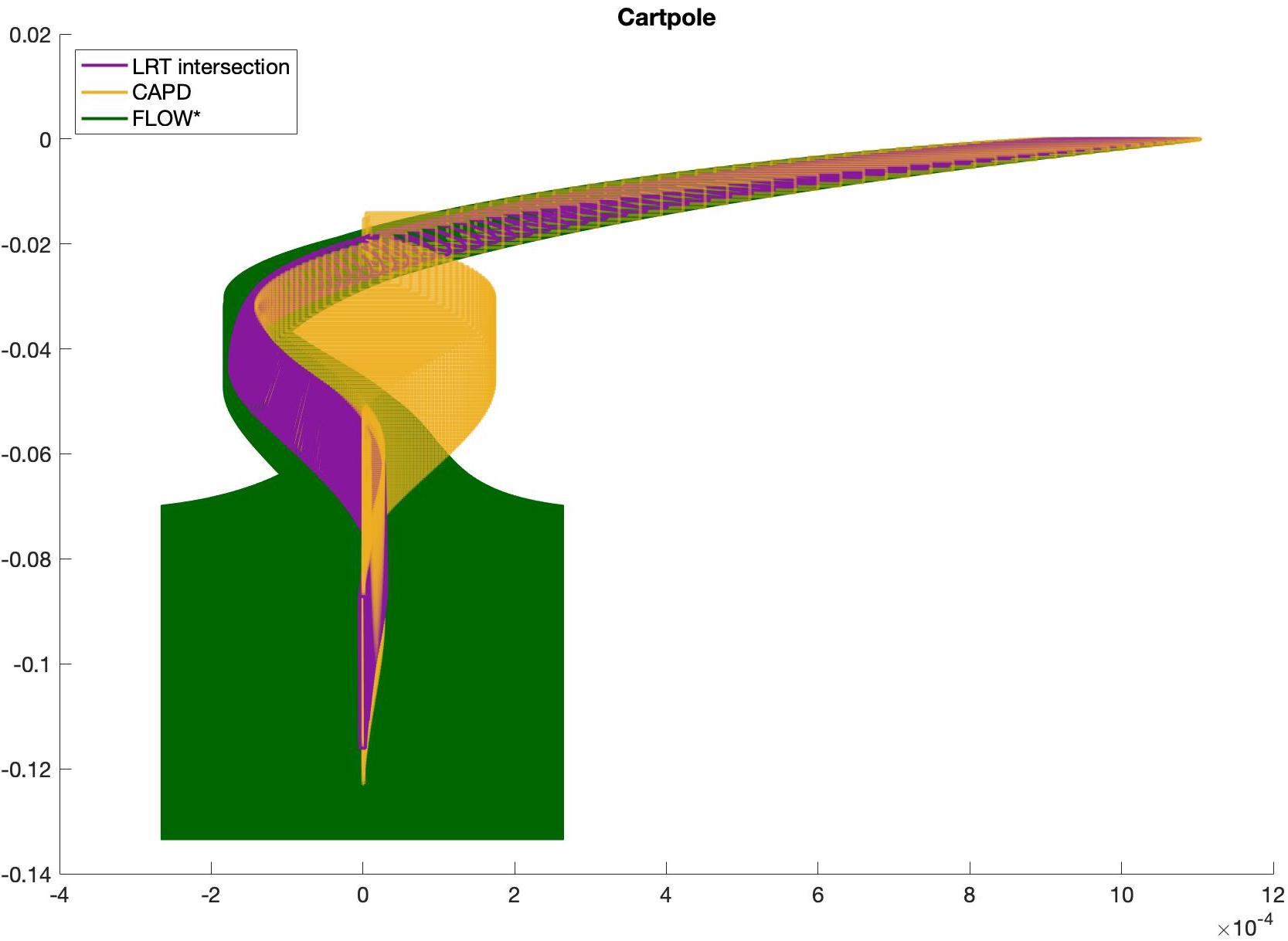}
    \caption{Cartpole model until time $t=10$.\label{fig:Cartpole}}
\end{figure}
\begin{figure}
    \centering
    \includegraphics[width=0.34\textwidth]{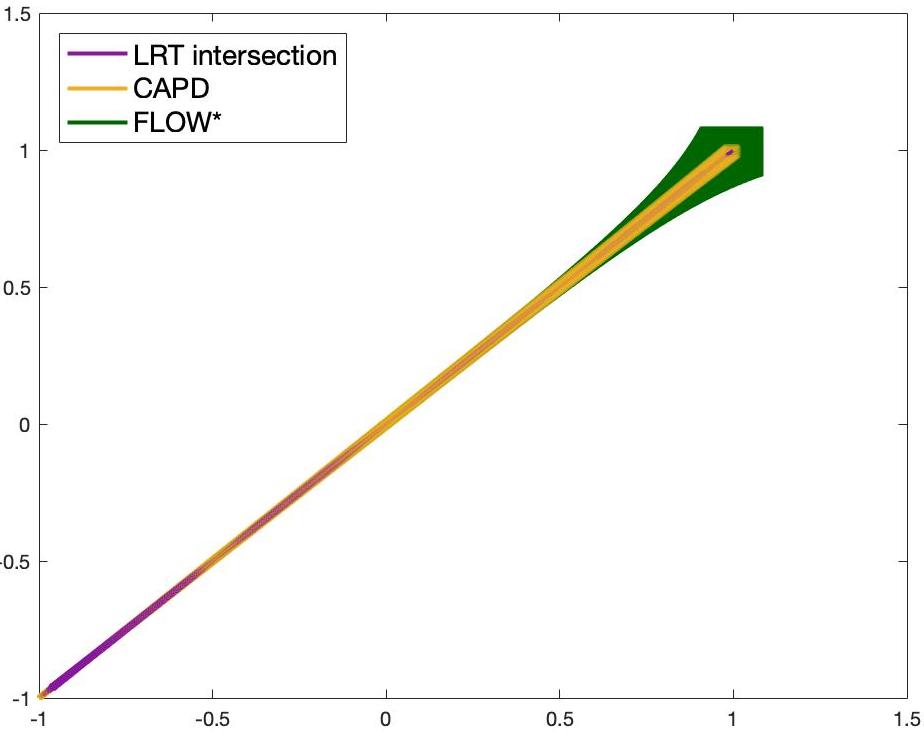}
    \caption{Quadcopter model until time $t=2$.\label{fig:Quadcopter}}
    \vspace*{-3ex}
\end{figure}
The comparison of the tools until time $t=10$ is shown in Fig.~\ref{fig:Cartpole}. The picture illustrates a projection of the reachtubes onto the $(\theta,x)$-plane, thus the pictures shows the angle of the pole being stabilized by moving the cart position. As one can see, LRT-NG has the tightest reachtubes, followed by CAPD and Flow*.

\textbf{Quadcopter model.}
Fig.~\ref{fig:Quadcopter} illustrates the projection of the reachtubes onto the $(p_n,p_e)$-plane, showing that LRT-NG has by far the tightest reachtubes compared to CAPD and Flow*.

\textbf{Cartpole with Neural ODEs as controller.}
We developed and trained two different continuous-time recurrent neural networks to control a cartpole: a Neural ODE~\cite{neuralODEs} and an LTC recurrent neural network~\cite{icra19}.
\begin{figure}
\vspace*{2ex}
    \centering
    \begin{subfigure}[b]{0.45\textwidth}
        \centering
        \includegraphics[width=0.9\textwidth]{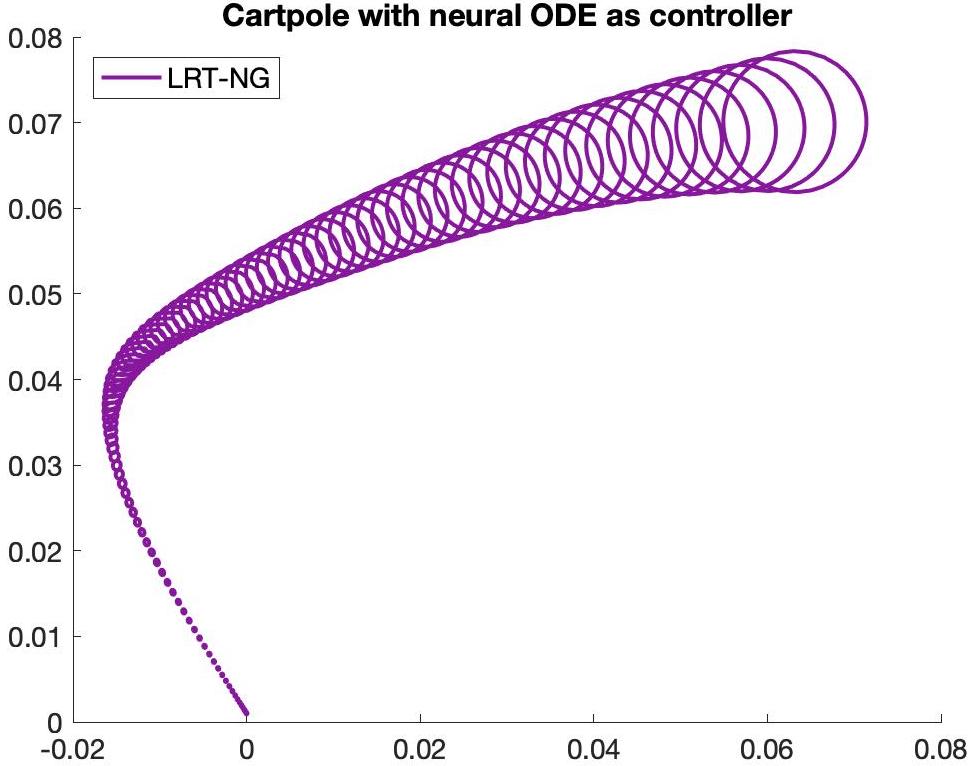}
        \caption{Neural ODE controller until $t\,{=}\,1$}
        \label{fig:cartpoleNeuralODE}
        \vspace*{1ex}
    \end{subfigure}
    \begin{subfigure}[b]{0.45\textwidth}
        \centering
        \includegraphics[width=0.9\textwidth]{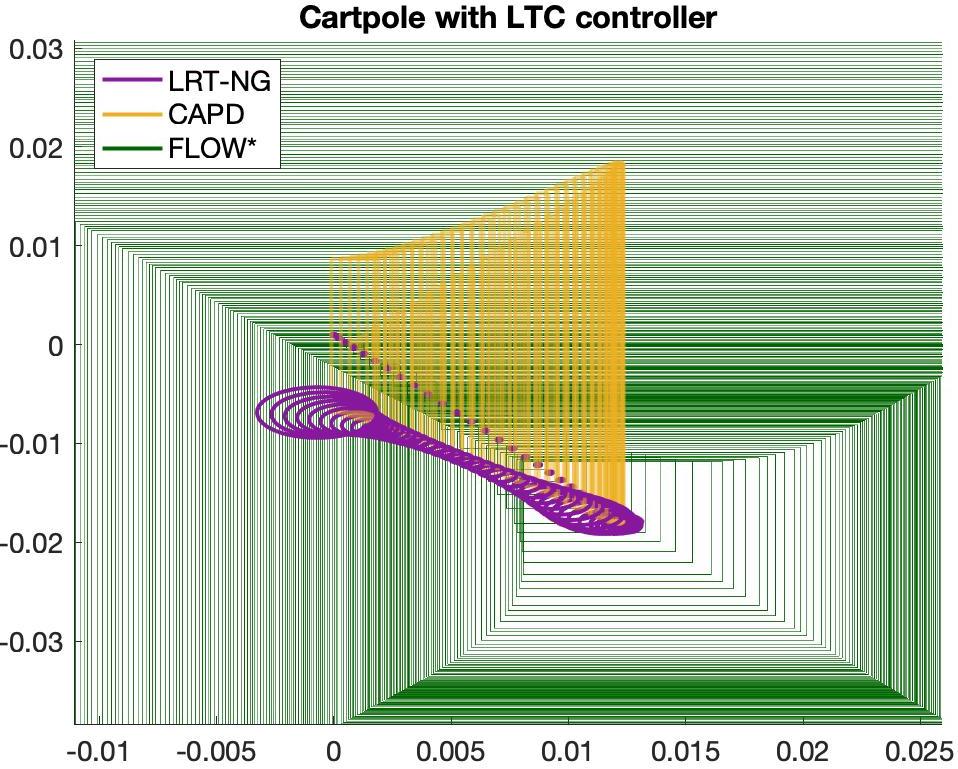}
        \caption{LTC recurrent neural network controller until $t\,{=}\,0.35$}
        \label{fig:cartpoleLTC}
    \end{subfigure}
    \caption{Cartpole with two different Neural ODEs as controller}
    \label{fig:cartpoleDNN}
    \vspace*{-3ex}
\end{figure}
Neural ODE controller: as Flow* and CAPD are not able to construct a reachtube for a longer time horizon than $t=0.135$, we show in Fig.~\ref{fig:cartpoleNeuralODE} only the reachtube constructed by LRT-NG onto the $(x,\theta)$-plane until $t=1$.
LTC (Neural ODE) controller: Fig.~\ref{fig:cartpoleLTC} illustrates the reachtube for the same cartpole system dynamics controlled by an LTC neural network. Our tool was able to run until $t=0.57$, Flow* blows up before $t=0.29$ with an average volume of $2.3\text{e-}1$ and CAPD blows up before $t=0.33$ with an average volume of $2.6\text{e-}18$.

\section{Conclusions and Future Work}\label{conclusions}

We introduceded LRT-NG, a significant improvement of LRT in terms of both
theoretical underpinning and tool implementation. As our experiments show, LRT-NG is superior to LRT, CAPD, and Flow*. Unlike previous bloating-based reachability analysis tools, LRT-NG uses an explicit analytic solution for the optimal metric, minimizing the ellipsoidal reachset. Also, the reachset computation is based on intersections between balls and ellipsoids, which considerably reduces the wrapping effect. Finally, LRT-NG avoids the wrapping effect occurring in the validated integration of the center of the reachset by optimally absorbing the interval approximation in the radius of the ball. Conservative methods still have a problem with high-dimensional stiff dynamics for long time horizons, this is why an interesting direction for future work is to explore complementary stochastic approaches. The ultimate goal is to scale up to large NeuralODE controllers for complicated CPS or biological tasks. 





\section*{Acknowledgment}
The authors would like to thank Ramin Hasani and Guillaume Berger for intellectual discussions about the research which lead to the generation of new ideas.
ML was supported in part by the Austrian Science Fund (FWF) under grant Z211-N23 (Wittgenstein Award).  Smolka's research was supported by NSF grants CPS-1446832 and CCF-1918225. Gruenbacher is funded by FWF project W1255-N23. JC was partially supported by NAWA Polish Returns grant PPN/PPO/2018/1/00029.




\bibliography{root}
\bibliographystyle{abbrv}

\newpage
\appendices

\section{Details of the used benchmarks}
\addcontentsline{toc}{section}{Details of the used benchmarks}
\label{secbenchmarks}
In this appendix we present the details of all of the  benchmarks used in the paper, i.e. the system of ODEs and the interpretation of the variables.

\subsection{Brusselator}
The Brusselator is a two ODEs theoretical model for auto-catalytic reactions, where $x_1$ and $x_2$ are chemical concentrations:
{\small
\begin{align*}
    \partial_tx_1 &= 1 + x_1^2x_2 - 2.5x_1 &x_1(t_0) = 1\\
    \partial_tx_2 &= 1.5x_1-x_1^2x_2 &x_2(t_0) = 1
\end{align*}
}

\subsection{Van der Pol Oscillator}
The two ODEs below describe a non-conservative oscillator with non-linear damping, whose position is $x_1$, and velocity is $x_2$:
{\small
\begin{align*}
    \partial_tx_1 &= x_2 &x_1(t_0) = -1\\
    \partial_tx_2 &= (x_1^2-1)x_2-x_1 &x_2(t_0) = -1
\end{align*}
}

\subsection{Robotarm}
The Robotarm~\cite{robot_arm_2000} is described by the four ODEs below, where $x_1$ is the angle, $x_2$ is the position, $x_3$ is the angular velocity and $x_4$ the velocity of the robotarm:
{\small
\begin{align*}
    \partial_t x_1 &= x_3 &x_1(t_0) = 1.505\\
    \partial_t x_2 &= x_4 &x_2(t_0) = 1.505\\
    \partial_t x_3 &= \frac{-2 x_2 x_3 x_4 - 2 x_1 - 2 x_3}{x_2^2 + 1} + \frac4{x_2^2 + 1} &x_3(t_0) = 0.005\\
    \partial_t x_4 &= x_2 x_3^2 - x_2 - x_4 + 1 &x_4(t_0) = 0.005
\end{align*}
}

\subsection{Dubins Car}
The Dubins Car can either go forward, turn fully left, or turn fully right. Its position is given by $x_1$ and $x_2$, and its pose by $x_3$. As the pose is time variant, we added $t$ as an additional variable. LRT-NG does not include the time variable in the reachtube construction, and it allows the user to explicitly point out the time variable:

{\small
\begin{align*}
    \partial_t x_1 &= \cos(x_3) &x_1(t_0) = 0\\
    \partial_t x_2 &= \sin(x_3) &x_2(t_0) = 0\\
    \partial_t x_3 &= x_1 \sin(t) &x_3(t_0) = 0.7854\\
    \partial_t t &= 1 &t(t_0) = 0
\end{align*}}

\subsection{Cardiac-cell model}
The Mitchell-Schaeffer model~\cite{mitchell2003two} is a simplified description of the electrical activity of the cardiac-cells membrane. The model incorporates only two state variables: the electric potential of the cell membrane $x_1$, and the gate dynamics of the sodium ion channels $x_2$. The continuous version of the model is described by the following two nonlinear ODEs: 
{\small 
\begin{align*}
    \partial_tx_1 &= x_2 x_1^2 \frac{1 - x_1}{0.3} - \frac{x_1}{6} &x_1(t_0) = 0.8\\
    \partial_tx_2 &= \sigma(x_1)\frac{-x_2}{150} + (1-\sigma(x_1)) \frac{1 - x_2}{20} &x_2(t_0) = 0.5\\
    \textrm{with } \sigma(x_1) &= 0.5 (1+\tanh{ (50x_1-5) })
\end{align*}}

\subsection{Cartpole with linear stabilizing controller}
The cartpole model coupled with a stabilizing linear controller is described by the following ODEs.
{\small
\begin{align*}
    \partial_t\sigma &= \frac{f \cos(\theta) - ml\sigma^2\cos(\theta)\sin(\theta) + (m+M)g\sin(\theta)} {l(M+m\sin^2(\theta)},\\
    \partial_tw &= \frac{f + m\sin(\theta) * (-l\sigma^2 + g\cos(\theta))}{M+m\sin^2(\theta)},\\
    \partial_tx &= w,\\
    \partial_t\theta &= \sigma,\\
    f &= -1.1Mg\theta - \sigma.
\end{align*}}
We use the standard cartpole model, with an external controller ($f$). The variables have the following interpretation: $x$ is the position of the cart, $\theta$ is the angle of the pole with respect to the cart, $w$ is the velocity of the cart, and $\sigma$ is the velocity of the pole. The constants $m, M, l, g$ denote the mass of the pole and the cart, the length of the pole and the gravitation constant respectively. The forcing $f$ is selected such that it is stabilizing the pole angle (the angle converges to $0$ from some initial angle $\theta_0$). As the initial condition we use $x_0=0, w_0=0, \theta_0=0.001, \sigma_0=0$ (the center of the ball), and set the constants $M=1, m=0.001, g=9.81, l=1$.

\subsection{Quadcopter model}
The quadcopter model is described by the following ODEs
{\small
\begin{align*}
\partial_tp_n &= 2u(q_0^2+q_1^2-0.5) - 2v(q_0q_3-q_1q_2) + 2w(q_0q_2+q_1q_3)\\
\partial_tp_e &= 2v(q_0^2+q_2^2-0.5) + 2u(q_0q_3+q_1q_2)-2w(q_0q_1-q_2q_3)\\
\partial_th &= 2w(q_0^2+q_3^2-0.5) - 2u(q_0q_2-q_1q_3)+2v(q_0q_1+q_2q_3)\\
\partial_tu &= rv-qw-11.62(q_0q_2-q_1q_3)\\
\partial_tv &= pw-ru+11.62(q_0q_1+q_2q_3)\\
\partial_tw &= qu-pv+11.62(q_0^2+q_3^2-0.5)\\
\partial_tq_0 &= -0.5q_1p-0.5q_2q-0.5q_3r\\
\partial_tq_1 &= 0.5q_0p - 0.5q_3q+0.5q_2r\\
\partial_tq_2 &= 0.5q_3p + 0.5q_0q-0.5q_1r\\
\partial_tq_3 &= 0.5q_1q - 0.5q_2p + 0.5 q_0r\\
\partial_tp &= (-40.0006326p_I-2.82839798295p)-1.1334074237qr\\
\partial_tq &= (-39.9998045q_I-2.82837525410q)+1.1320781796pr\\
\partial_tr &= (-39.9997891r_I-2.82841342233r)-0.00469522pq\\
\partial_tp_I &= p,\  \partial_tq_I = q,\  \partial_tr_I = r,\  \partial_th_I = h
\end{align*}}
\noindent We do not include an external controller in the model. The variables have the following interpretations; $(p_n,p_e,h)$ is the inertial position, $(u,v,w)$ is the linear velocity, $(q_0,q_1,q_2,q_3)$ is the angular orientation quaternion, $(p,q,r)$ is the angular velocity, and $(p_I, q_I, r_I)$ are the integral states. As the set of initial states we set
$p_n\in[-1,-0.99], p_e\in[-1,-0.99], h\in[9, 9.01], u\in[-1,-0.99], v\in[-1,-0.99], w\in[-1,-0.99], q_0=0, q_1=0, q_2=0, q_3=1, p\in[-1,-0.99], q\in[-1,-0.99], r\in[-1,-0.99], p_I=0, q_I=0, r_I=0, h_I=0$. For details refer \cite{quad}.

\subsection{Cartpole with Neural ODE controller}
We used the gym environment to train the neural network controller and thus the cartpole's dynamic is modelled by the following nonlinear ODEs~\cite{Barto83} (considering that there is neither a friction of the cart on the track nor of the pole on the cart):
{\small
\begin{align*}
    \partial_t\sigma &= \frac{g \sin(\theta) + \cos(\theta)
    \left(
    \frac{-F-ml(\partial_t\theta)^2\sin(\theta)}{m_c + m}
    \right)}
    {l
    \left(
    \frac4{3}-\frac{m\cos^2(\theta)}{m_c + m}
    \right)
    } &\sigma(t_0) = 0\\
    \partial_tw &= \frac{F + ml
    \left(
    (\partial_t\theta)^2\sin(\theta)-(\partial_t\sigma)\cos(\theta)
    \right)}{m_c + m} &w(t_0) = 0\\
    \partial_tx &= w &x(t_0) = 0\\
    \partial_t\theta &= \sigma &\theta(t_0) = 0.001
\end{align*}}
with the constants set to $g=9.81$ (acceleration due to gravity), $m_c = 1$ (mass of cart), $m=0.1$ (mass of pole), $l=0.5$ (half-pole length) and $F\in[-10,10]$ (force applied to cart's center of mass), which is determined by the chosen action of the Neural ODE controller.

The \emph{Neural ODE controller} we trained has one layer with eight neurons $h_i, i\in\{1,\dots,8\}$. Let us define the current state of the environment as $y = (x,w,\theta,\sigma)$ and the current state of the controller as $h = (h_i)_{i\in\{1,\dots,8\}}$. Each neuron is defined by an ODE as follows:
{\small
\begin{align*}
    \partial_th_i = -h_i + \tanh(Wy+Vh+b)
\end{align*}
}
and the final policy is then obtained by
{\small
\begin{align*}
    F = 10 \cdot \tanh(Uh+c).
\end{align*}
}
\noindent{}with weighting matrices $W,V,U$ and biases $b$ and $c$, which are learned during the training process.

The \emph{LTC recurrent neural network} we trained has also one layer with eight neurons as described by the ODE
\label{ltceq}
\begin{align*}
&\partial_t v_i =\frac{1}{c_{i}}\Big( g_{leak,i} \big(v_{leak,i} - v_i \big) + \sum_{j=1}^{8} \frac{w_{ij} (E_{ij} - v_{i})}{(1+ e^{-\sigma_{ij}(v_{j} + \mu_{ij})})}\Big),\\
&v_i(t_0)=0
\end{align*}
for $i \in \{1,\dots 8\}$.
The output of the network is computed by
\begin{align*}
    F = 10 \cdot \tanh\Big(\sum_{j=1}^{8}a_i v_i + b\Big).
\end{align*}
The values of the parameters $c_{i}, g_{leak,i},v_{leak,i}, w_{ij}, E_{ij}, \sigma_{ij}$, $\mu_{ij}, a_i$ and $b$ for $i,j \in \{1,\dots 8\}$ are learned during the training procedure.

\end{document}